\let\Oldsection\section
\renewcommand{\section}{\FloatBarrier\Oldsection}
\let\Oldsubsection\subsection
\renewcommand{\subsection}{\FloatBarrier\Oldsubsection}
\let\Oldsubsubsection\subsubsection
\renewcommand{\subsubsection}{\FloatBarrier\Oldsubsubsection}
\newcommand{\ie}{{\it i.e.}}
\newcommand{\cf}{{\it cf.}}
\newcommand{\eg}{{\it e.g.}}
\newcommand{\iid}{{\it i.i.d.}}
\newcommand{\PP}{{\mathbb P}}
\newcommand{\RR}{{\mathbb R}}
\newcommand{\NN}{{\mathbb N}}
\newcommand{\FB}{\mathbb{F}_B}
\newcommand{\FA}{\mathbb{F}_A}
\newcommand{\DB}{\mathbb{D}_{B}}
\newcommand{\DA}{\mathbb{D}_{A}}
\newcommand{\conv}[1]%
  {{\mathrel{\,\xrightarrow{\widthof{\,#1\,}}\,}}}
\newcommand{\convas}[1]%
  {{\mathrel{\,\xrightarrow{\widthof{\,#1\text{-a.s.}\,}}\,}}}
\newcommand{\convprob}[1]%
  {{\mathrel{\,\xrightarrow{\widthof{\,#1\,}}\,}}}
\newcommand{\convweak}[1]%
  {{\mathrel{\,\xrightarrow{\widthof{\,#1\text{-w.}\,}}\,}}}
\renewcommand{\qedsymbol}{$\Box$}
\newtheoremstyle{customtheorem}
  {0.5em}
  {0.2em}
  {\itshape}
  {}
  {\scshape}
  {}
  {1ex}
  {}
\theoremstyle{customtheorem}
\newtheorem{theorem}{Theorem}[section]
\newtheorem{proposition}[theorem]{Proposition}
\newtheorem{definition}[theorem]{Definition}
\newtheoremstyle{customremark}
  {0.5em}
  {0.2em}
  {}
  {}
  {\scshape}
  {}
  {1ex}
  {}
\theoremstyle{customremark}
\renewenvironment{proof}{\par\noindent{\scshape Proof}\;}{\hfill\qedsymbol\par}
\newtheorem{remark}[theorem]{Remark}
\newtheorem{assumption}{Assumption}
\begin{document}
\thispagestyle{empty}

\title{Heavy tailed distributions in closing auctions}
\author{
  M. Derksen${}^{1,2}$, B. Kleijn${}^{2}$ and R. de Vilder${}^{1,2}$\\[1mm]
  {\small\it  ${}^{1}$ Deep Blue Capital N.V., Amsterdam}\\
  {\small\it  ${}^{2}$ Korteweg-de~Vries Institute for Mathematics,
    University of Amsterdam}
  }
\date{\today}
\maketitle

\begin{abstract} \noindent
We study the tails of closing auction return distributions for a sample of liquid European stocks. We use the stochastic call auction model of \cite{Derksenetal20a}, to derive a relation between tail exponents of limit order placement distributions and tail exponents of the resulting closing auction return distribution and we verify this relation empirically. Counter-intuitively, large closing price fluctuations are typically not caused by large market orders, instead tails become heavier when market orders are removed. 
The model explains this by the observation that limit orders are submitted so as to counter existing market order imbalance.
\smallskip\\
{\sl Key Words:} Closing auction; Closing prices; Stochastic models; Price formation; Heavy tails;
\end{abstract}

\section{Introduction}
 During the trading day, most securities change hands in continuous double auctions, in which buy and sell orders are immediately matched if possible. However, to determine opening and closing prices, call auctions are often conducted. In a call auction, orders are aggregated for an interval of time, after which all possible transactions are conducted against a single clearing price that maximizes trading volume. In this paper we study the tails of closing auction return distributions.
 
Nowadays it is widely recognized that distributions of (stock) price changes exhibit heavy tails: extreme price changes (of \eg\ more than three standard deviations) are much more likely than in a Gaussian model or other models with exponentially decaying tails. This issue was first adressed by \citet{Mandelbrot63} in his analysis of cotton prices, where he proposed L\'evy stable distributions to model price fluctuations. It is generally assumed that the tails follow a power law asymptotically. That is, the distribution of a return $X$ over some time interval satisfies\footnote{Here, $\sim$ denotes \emph{asymptotic equivalence}, defined as $f \sim g \Leftrightarrow \lim_{x \to \infty} \frac{f(x)}{g(x)}=1$.},
\begin{align} \label{eq:powerlaw}
\PP(X>x) \sim Cx^{-a}, \text{ as } x \to \infty,
\end{align}
 where $C>0$ is a constant (sometimes also replaced by a slowly varying factor $L(x)$) and $a>0$ is the \emph{tail exponent}, determining how heavy the tail is . In early work \citep{Fama65}, the exponent $a$ was believed to be below 2 for stock prices (in line with the stable distributions of \cite{Mandelbrot63}). However, subsequent analyses have shown that the exponent is more likely to be around 3 on intraday  time scales (see \eg\ \cite{Gopikrishnanetal98,Gopikrishnanetal99,Guetal08,Pagan96,PlerouStanley08}, among many others). Although it is generally accepted to model the tails as power laws, the exact functional form is also subject of debate. For example, \cite{Malevergneetal05} conclude that the tails decay slower than stretched exponential distributions, but somewhat faster than power laws. In this paper, we do not aim to answer this question, but use power laws because they describe the tails in enough detail for our analysis. Theoretically, the functional form in equation (\ref{eq:powerlaw}) is justified by extreme value theory, in the Fr\'echet (heavy tailed) case (see \eg\ \cite{Embrechtsetal03}). 
 
 Although most part of the relevant literature focuses on description of the tails of stock price return distributions, some effort has gone towards explanations of this tail behaviour. \cite{Gabaixetal03,Gabaixetal06} argue that large price fluctuations are due to large orders submitted by large market participants. However, \cite{Farmeretal04} and \cite{WeberRosenow06} study the issue on the microscopic level and find that large returns are not due to large transactions, but instead are caused by big gaps in the order book, \ie\ fluctuations in liquidity.  \cite{MikeFarmer08} propose a simulation based model for continuous trading, which suggests heavy tails in return distributions are caused by market microstructure effects, such as heavy tails in limit order placement and long memory in order flow. More theoretically, \cite{Baketal97} and \cite{ContBouchaud00} propose models linking heavy tails to herd behaviour.

\subsection{Main results}
In this paper, we use the model of \cite{Derksenetal20a} to study the distribution of returns in the closing auction. In the model, limit orders are submitted to the auction randomly, with a limit price that is sampled from an \emph{order placement distribution} $F_A$ (for sell orders) or $F_B$ (for  buy orders). 
We study the closing auctions of liquid European stocks listed on Euronext exchanges and find that both return distributions and order placement distributions exhibit heavy tails, with different tail exponents. \cite{ZovkoFarmer02} conclude \emph{`It seems that the power law for price
fluctuations should be related to that of relative limit prices,
but the precise nature and the cause of this relationship is not
clear.'} Here, we solve this problem in the context of the closing auction: we provide analytical relations between the tails of order placement distributions and the tails of the closing price return distribution. In a version of the model without market orders, the tails of the closing price distribution behave as the product of the tails of the order placement distributions $F_A$ and $F_B$. When we incorporate market orders, this relation changes, depending on a proportionality relation between market order and limit order imbalances. We empirically verify the relations between tail exponents of order placement and auction return distributions predicted by the model.

 In theory, large market orders are a possible cause of large price fluctuations. We show however that this is typically not the case in closing auctions, which is our second important result. Somewhat counter-intuitively, the empirical study shows that closing auction return distributions would have \emph{heavier} tails if market orders are removed, suggesting that market orders have a stabilizing effect on price formation in closing auctions.
Theoretically, we show (for the right tail) that this (initially perhaps somewhat puzzling) empirical fact can only arise whenever
\begin{align}\label{eq:heavier_condition0}
0 <  \frac{M_B-M_A}{N_A-N_B} \leq \frac{a_A}{a_B},
\end{align} 
under the assumption that $F_B$ and $F_A$ have heavy right tails with tail exponents $a_B$ and $a_A$ satisfying $a_B>a_A>0$.
Here, $N_A$ is the sell limit order volume, $N_B$ the buy limit order volume and $M_A$ and $M_B$ denote the sell and buy market order volume. This equation poses two conditions that should be fulfilled to make it theoretically possible that tails of closing auction return distributions are heavier without market orders. First, limit order imbalance and market order imbalance should be of opposite signs (when $M_B>M_A$, it should hold that $N_A>N_B$ and vice versa) and limit order imbalance $N_A-N_B$ should be larger in absolute value than market order imbalance $M_B-M_A$, meaning that limit orders overcompensate for market order imbalance. Second, $a_B$ should not be too large, \ie\ the right tail of the buy limit order placement distribution needs to be sufficiently heavy. We show that equation (\ref{eq:heavier_condition0}) is indeed satisfied on average empirically, which is explained by the chronology of the closing auction: most of the market orders are submitted in the first seconds, revealing early in the auction the market order imbalance. This leads to strategic behaviour in which limit orders are placed \emph{against} the direction of the market order imbalance: when there are more buy than sell market orders, one can submit a (possibly large) sell order without adversely impacting the price.  
Our results suggest that large closing price fluctuations are not caused by large market orders (at least, not directly), but by placement of limit orders, in accordance with the intraday results of \cite{Farmeretal04} and \cite{WeberRosenow06}. Also, our results suggest that heavy tails are market microstructure effects and that the tail exponents vary between different stocks and different market mechanisms, in line with the view of \cite{MikeFarmer08}.

The remainder of this paper is structured as follows. In section \ref{sec:theory} the model is described and theoretical results are derived. Then in section \ref{sec:empirics} the empirical results are presented and the relations that are predicted by the model are verified. Concluding remarks are made in section \ref{sec:conclusion} and proofs of the mathematical theory are collected in the appendix.
\section{Theoretical results}\label{sec:theory}
In this section we recall the auction model of \cite{Derksenetal20a} and derive analytical expressions for the tail behaviour of the return distribution, given the tails of order placement distributions.
\subsection{A stochastic model of the call auction}
 In the standard call auction, orders are aggregated over an interval of time and then matched to transact at
a clearing price that maximizes the total transacted volume. Suppose $N_A$ sell limit orders and $N_B$ buy limit orders are submitted to the auction (all orders have unit size). We assume market participants on both sides of the market formulate their orders independently, according to certain order placement distributions $F_A$ and $F_B$. Here, $F_A$ denotes the distribution of sell orders and $F_B$ the distribution of buy orders. That is, we model the sell order prices $(A_1,\dots,A_{N_A})$ as an \iid\ sample from $F_A$ and the buy order prices $(B_1,\dots,B_{N_B})$ as an \iid\ sample from $F_B$\footnote{Of course these assumptions are not all realistic. In reality, orders have different sizes and market participants may react to each other's orders. Despite these simplifying assumptions, the model provides a reliable stochastic description of auction price formation (see \cite{Derksenetal20a}).}.

For convenience we consider the log return axis instead of the real price axis. We assume there is some \emph{reference price} $x_0$ (for example the last traded price before the auction starts or a volume weighted averaged version thereof) and all prices are expressed as log returns relative to this reference price. So $F_A$ and $F_B$ are distributions on $(-\infty,\infty)$ and $F_A(x)$ or $F_B(x)$ denotes the probability that a sell or buy order price is below $x_0e^x$.
Given $(N_A,N_B)$, we denote by $\FA$ and $\FB$ the empirical distribution functions corresponding to the samples $(A_1,\dots A_{N_A})$ and $(B_1,\dots, B_{N_B})$, meaning
\begin{align*} \FA(x) = \frac{1}{N_A} \sum_{i=1}^{N_A} \mathbf{1}_{\{A_i \leq x\}}, &\quad  \FB(x) = \frac{1}{N_B} \sum_{i=1}^{N_B} \mathbf{1}_{\{B_i \leq x\}}
\end{align*}
Furthermore, we define the (monotone increasing) \emph{supply curve},
$$\DA(x) = N_A\FA(x)$$
and the (monotone decreasing) \emph{demand curve},
$$  \DB(x) = N_B(1-\FB(x)).$$
The supply curve denotes for every $x \in \mathbb{R}$ the number of sell orders below $x_0e^x$, the demand curve gives for every $x \in \mathbb{R}$ the number of buy orders above $x_0e^x$. Given all buy and sell orders, the clearing price is the price that maximizes the transactable volume in the auction, which is the price where supply and demand curves cross. That is, the \emph{clearing price} $X$ is defined as the solution to the market clearing equation,
\begin{align}\label{eq:clearingeq} \DA(X) = \DB(X).
\end{align}
This definition of $X$ may give rise to problems with uniqueness and existence of solutions to equation (\ref{eq:clearingeq}), as illustrated in figure \ref{fig:da_db_simple}. To solve these issues, consider the following definition.
\begin{definition}
For given supply curve $\DA$ and demand curve $\DB$, the \emph{lower clearing price} is defined by
\begin{align} 
\underline{X}& = \inf\{x \in \RR: \DA(x) \geq \DB(x))\}
\end{align}
and the \emph{upper clearing price} is defined by
\begin{align}
\overline{X}& = \sup\{x \in \RR: \DA(x) \leq \DB(x) \} \nonumber \\
& =  \inf\{x \in \RR: \DA(x) > \DB(x))\}.
\end{align}
The interval $[\underline X,\overline{X})$ is the interval of all possible clearing prices.
\end{definition}
\begin{figure}[htb]
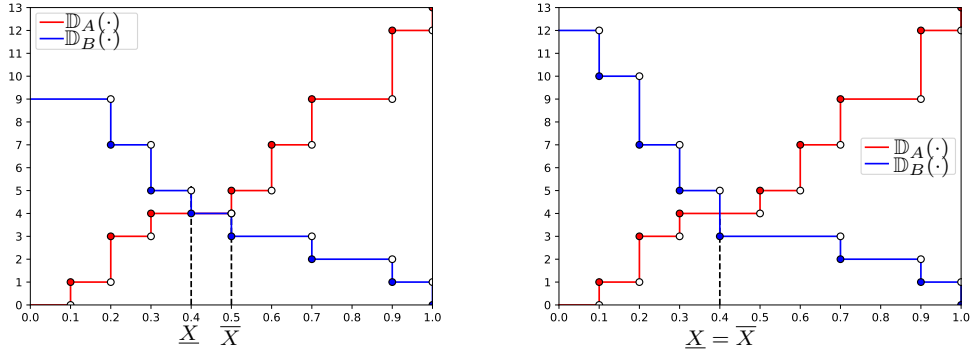

	\centering
    \begin{lpic}{da_db_simple2(0.42)}
      \lbl[t]{40,106;{\scriptsize{$\DA(\cdot)$}}}
      \lbl[t]{40,100.5;{\scriptsize{$\DB(\cdot)$}}}      
      \lbl[t]{70,7;{\scriptsize{$\underline X$}}}
      \lbl[t]{83,7;{\scriptsize{$\overline X$}}}
    \end{lpic}
    \begin{lpic}{da_db_cross2(0.42)}
      \lbl[t]{134,66;{\scriptsize{$\DA(\cdot)$}}}
      \lbl[t]{134,60.5;{\scriptsize{$\DB(\cdot)$}}}   
      \lbl[t]{71,7;{\scriptsize{$\underline X=\overline X$}}}
    \end{lpic}
    \caption[Stylized examples of a call auction.]{
      \label{fig:da_db_simple}
      Two examples of the supply curve $\DA(\cdot)$ (the increasing
      (red) step function) and the demand curve $\DB(\cdot)$ (the decreasing
      (blue) step function).
      Left panel: a situation in which there is no unique point of
      intersection, but an interval $[\underline X,\overline X)$ of possible clearing prices. Right panel: a situation in which there is a unique
      intersection point $\underline{X}=\overline X$.}
\end{figure}
\begin{remark} \label{remark:clearingprices}
 Euronext's closing auction rules say that when there are more possible clearing prices, the price closest to the last traded price is taken \citep[Rule 4401/3]{EN19}. This means that when there is a large positive return, the closing price is equal to the lower clearing price $\underline X$. So in order to study the right tail of the closing price return distribution, we should study $\underline X$. The same reasoning implies that for the left tail we should consider $\overline X$. Note that the model is symmetric when the roles of $\overline X$ and $\underline X$ and the sides of the market are interchanged. That is, the left tail of the distribution of $\overline X$ behaves the same as the right tail of the distribution of $\underline X$, when  $F_A$ and $F_B$ and $N_A$ and $N_B$ are interchanged. So without loss of generality we focus on the right tail of $\underline X$.
\end{remark}
The distribution of the lower clearing price, conditional on $(N_A,N_B)$, has an analytically tractable distribution function, given in the following theorem (see \cite{Derksenetal20a}, theorem 2.3).
\begin{restatable}[Lower clearing price distribution]
{theorem}{thmclearingpricedists} \label{thm:clearingpricedists}
The distribution of the lower clearing price $\underline{X}$, conditional on $(N_A,N_B)$, is given by its survival function,
\begin{align} \label{eq:clearingpricedist}
\PP&(\underline{X}>M|N_A,N_B)\nonumber \\ 
&  =\sum_{k=0}^{N_A} \sum_{l=k+1}^{N_B}{N_A \choose k} {N_B \choose l} (1-F_A(M))^{N_A -k} F_A(M)^k (1-F_B(M))^l F_B(M)^{N_B-l}.
\end{align}
\end{restatable}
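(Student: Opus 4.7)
The plan is to reduce the event $\{\underline X > M\}$ to a simple event about two independent binomial counts and then compute the joint probability directly.

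First, I would establish the key equivalence
\begin{align*}
\{\underline X > M\} = \{\DA(M) < \DB(M)\}
\end{align*}
conditional on $(N_A,N_B)$. The forward direction uses that $\DA$ is nondecreasing and right-continuous while $\DB$ is nonincreasing and right-continuous: if there is any $x \leq M$ with $\DA(x) \geq \DB(x)$, then $\underline X \leq x \leq M$ by the definition of the infimum. The reverse direction uses monotonicity: if $\DA(M) < \DB(M)$, then for all $x \leq M$ we have $\DA(x) \leq \DA(M) < \DB(M) \leq \DB(x)$, so $\{x : \DA(x) \geq \DB(x)\} \subseteq (M,\infty)$ and thus $\underline X \geq M$; with a little care (using right-continuity to argue that $\DA(x) \geq \DB(x)$ cannot hold at $x = M$ either, since strict inequality at $M$ extends to a right-neighbourhood) we get the strict inequality $\underline X > M$.

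Next, I would identify the distributions of $\DA(M)$ and $\DB(M)$. Since $\DA(M) = \sum_{i=1}^{N_A} \mathbf{1}_{\{A_i \leq M\}}$ is a sum of $N_A$ i.i.d.\ Bernoulli indicators with parameter $F_A(M)$, we have $\DA(M) \mid N_A \sim \Bin(N_A, F_A(M))$. Analogously, $\DB(M) = \sum_{j=1}^{N_B} \mathbf{1}_{\{B_j > M\}} \mid N_B \sim \Bin(N_B, 1 - F_B(M))$. Because the samples $(A_i)$ and $(B_j)$ are independent, so are $\DA(M)$ and $\DB(M)$ conditional on $(N_A,N_B)$.

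Finally, I would sum over the jointly possible values $k = \DA(M)$ and $l = \DB(M)$ subject to $l > k$:
\begin{align*}
\PP(\underline X > M \mid N_A,N_B) &= \PP(\DA(M) < \DB(M) \mid N_A,N_B) \\
&= \sum_{k=0}^{N_A} \sum_{l=k+1}^{N_B} \PP(\DA(M)=k \mid N_A)\,\PP(\DB(M)=l \mid N_B),
\end{align*}
and substituting the binomial probability mass functions yields exactly equation (\ref{eq:clearingpricedist}). The main obstacle is the first step: the equivalence of events must be handled carefully at the boundary point $x = M$, because the supply/demand curves are only right-continuous step functions, so strict versus non-strict inequalities and the choice of infimum both matter; once this set-theoretic identity is in place, the rest of the argument is a direct binomial computation.
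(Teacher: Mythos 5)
Your proof is correct: the identity $\{\underline X>M\}=\{\DA(M)<\DB(M)\}$ (with the boundary case at $x=M$ handled exactly as you describe, via monotonicity and right-continuity of $\DA-\DB$), combined with the conditionally independent binomial counts $\DA(M)\sim\Bin(N_A,F_A(M))$ and $\DB(M)\sim\Bin(N_B,1-F_B(M))$, gives precisely equation (\ref{eq:clearingpricedist}). Note that the present paper does not prove this theorem itself but imports it from \cite{Derksenetal20a} (Theorem 2.3); your reduction to a comparison of two independent binomial random variables is the natural argument for it, and I see no gap.
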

In the situation described above, only limit orders are submitted to the auction. However, market participants also have the possibility to submit market orders.
We define the (possibly stochastic) \emph{market order imbalance} by $\Delta= M_B-M_A$, where $M_B$ is the number of buy market orders and $M_A$ is the number of sell market orders. Note that market orders only play a role through $\Delta$, as matching market orders are executed against each other without affecting the price formation process. When market order imbalance $\Delta$ is taken into account, the market clearing equation (\ref{eq:clearingeq}) becomes
  $$\DA(X) = \DB(X) + \Delta$$
 and the definitions of $\underline X$ and $\overline X$ change accordingly.
A positive (negative) value of $\Delta$ means there is more buy (sell) market order volume than sell (buy) market order volume, possibly pushing the price up (down).
The market order imbalance alters the clearing price distribution as in the following proposition (a special case of proposition 2.8 in \cite{Derksenetal20a}).
\begin{proposition} [Lower clearing price distribution in case of market order imbalance] 
\label{prop:eqprice_el_dist}
When market order imbalance $\Delta$ plays a role, the lower clearing price distribution as computed in theorem \ref{thm:clearingpricedists} modifies into
\begin{align*}
\PP&(\underline{X}>M|N_A,N_B,\Delta)\\
&= \sum_{k=0}^{N_A} \sum_{l=\max(k-\Delta+1,0)}^{N_B}{N_A \choose k} {N_B \choose l} (1-F_A(M))^{N_A -k} F_A(M)^k (1-F_B(M))^l F_B(M)^{N_B-l}.
\end{align*}
\end{proposition}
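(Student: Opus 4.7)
The plan is to reduce the event $\{\underline{X} > M\}$ to a simple arithmetic condition on the (integer-valued) counts $\DA(M)$ and $\DB(M)$, and then exploit the fact that these counts are independent binomials.

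First, I would observe that with market order imbalance $\Delta$ the modified definition reads $\underline{X} = \inf\{x \in \RR : \DA(x) \geq \DB(x) + \Delta\}$. Since $\DA$ is nondecreasing, right-continuous and $\DB$ is nonincreasing, right-continuous, the map $x \mapsto \DA(x) - \DB(x)$ is nondecreasing and right-continuous. Hence the sublevel set $\{x : \DA(x) - \DB(x) \geq \Delta\}$ is a right-closed half-line $[\underline{X},\infty)$, and therefore
\[
\{\underline{X} > M\} \;=\; \{\DA(M) - \DB(M) < \Delta\}.
\]
Because $\DA(M),\DB(M),\Delta \in \ZZ$, this is the same as $\DB(M) \geq \DA(M) - \Delta + 1$.

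Second, I would identify the laws of the two counts. By definition $\DA(M) = \sum_{i=1}^{N_A} \mathbf{1}_{\{A_i \leq M\}}$ with $A_i$ iid $F_A$, so conditionally on $N_A$ we have $\DA(M) \sim \Bin(N_A, F_A(M))$. Analogously $\DB(M) = \sum_{j=1}^{N_B} \mathbf{1}_{\{B_j > M\}}$ is $\Bin(N_B, 1-F_B(M))$, independent of $\DA(M)$ by the independence of the sell- and buy-side samples.

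Finally, writing $k = \DA(M)$ and $l = \DB(M)$, the reduction above combined with the non-negativity of $l$ gives $l \geq \max(k - \Delta + 1, 0)$. Summing the product of the two binomial pmfs over $0 \leq k \leq N_A$ and $\max(k-\Delta+1,0) \leq l \leq N_B$ yields precisely the claimed expression. The only step that requires real care is the first one: one has to check that right-continuity and integrality of the involved quantities make the equivalence $\{\underline X > M\} \Leftrightarrow \{\DA(M) - \DB(M) < \Delta\}$ exact (no off-by-one in the strict/non-strict inequalities). Once that is in place, the rest reduces to the fact that Theorem~\ref{thm:clearingpricedists} is recovered at $\Delta = 0$, so this proposition is the straightforward $\Delta$-shifted generalisation.
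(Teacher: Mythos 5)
Your proof is correct. Note that the paper itself does not prove this proposition --- it is imported as a special case of Proposition 2.8 of \cite{Derksenetal20a} --- so there is no in-paper argument to compare against; your write-up supplies the missing derivation, and it is the natural one. The two essential points are both handled properly: (i) the reduction of $\{\underline X>M\}$ to $\{\DA(M)-\DB(M)<\Delta\}$, which you justify by observing that $x\mapsto \DA(x)-\DB(x)$ is nondecreasing and right-continuous (both empirical counts being right-continuous step functions), so that $\{x:\DA(x)\geq \DB(x)+\Delta\}=[\underline X,\infty)$ and the complement at $M$ is exactly the event in question, with integrality converting the strict inequality into $\DB(M)\geq \DA(M)-\Delta+1$; and (ii) the identification of $\DA(M)\sim\Bin(N_A,F_A(M))$ and $\DB(M)\sim\Bin(N_B,1-F_B(M))$ as independent, conditional on $(N_A,N_B,\Delta)$, which follows from the i.i.d.\ sampling assumptions. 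Summing the product of the two binomial mass functions over $l\geq\max(k-\Delta+1,0)$ then gives the stated double sum, and the check that $\Delta=0$ recovers Theorem \ref{thm:clearingpricedists} (lower limit $k+1$) is a good sanity check. One cosmetic slip: the set $\{x:\DA(x)-\DB(x)\geq\Delta\}$ is a \emph{left}-closed half-line $[\underline X,\infty)$, not ``right-closed''; the displayed conclusion is nevertheless the correct one.
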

\subsection{Limit order auctions} \label{subsec:only_limit_orders}
Next we concentrate on the right tail of the lower clearing price return distribution, as a function of the tails of the order placement distributions $F_A$ and $F_B$, initially without market orders. We make the following assumption on the tails of $F_A$ and $F_B$.
\begin{assumption} \label{ass:fa_heavier}
Assume $F_A$ has a heavier right tail than $F_B$. That is, there exists functions $T_A,T_B$ such that
\begin{align*} 1-F_A(M) \sim T_A(M),\quad 1-F_B(M) \sim T_B(M),\quad  \text{ as } M \to \infty
\end{align*}
and
\begin{align*}
\lim_{M \to \infty} \frac{T_B(M)}{T_A(M)}=0.
\end{align*}
\end{assumption}
This assumption is intuitively reasonable and empirically verified in section \ref{sec:empirics_fafb}.
Furthermore, we will assume that $(N_A,N_B)$ follows a distribution $P_{N_A,N_B}$ on $$\mathcal N= \{1,\dots,N\} \times \{1,\dots,N\},$$ for some $N \in \NN$, with probability mass function $p_{N_A,N_B}$ assigning positive probability to any point in $\mathcal N$ (we exclude the possibilities that $N_A=0$ or $N_B=0$, which describe failing auctions in which clearing prices do not exist).

In the following proposition we first derive an expression for the right tail of the lower clearing price distribution, conditional on $(N_A,N_B)$. Finding an expression for the tail of the clearing price distribution amounts to finding the slowest decaying term in the double sum of theorem \ref{thm:clearingpricedists}. This is made formal in the following proposition, the proof of which is found in the appendix.
\begin{restatable}{proposition}{propcondtails} \label{prop:fa_heavier}
Under assumption \ref{ass:fa_heavier}, we have
\begin{align}
\PP(\underline{X}>M|N_A,N_B) \sim N_B T_B(M) T_A(M)^{N_A}, \text{ as } M \to \infty.
\end{align}
\end{restatable}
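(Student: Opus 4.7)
The plan is to locate the single dominant term in the finite double sum of Theorem \ref{thm:clearingpricedists} and show every other term is negligible relative to it. Since the sum is finite and $N_A, N_B$ are fixed (we are working conditionally), it suffices to compare each term to the candidate dominant one and then add up a finite number of asymptotic relations.

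First I would rewrite a generic summand using $1-F_A(M)\sim T_A(M)$, $1-F_B(M)\sim T_B(M)$, and $F_A(M),F_B(M)\to 1$ as $M\to\infty$. This gives, for each admissible pair $(k,l)$,
\begin{align*}
\binom{N_A}{k}\binom{N_B}{l}(1-F_A(M))^{N_A-k}F_A(M)^{k}(1-F_B(M))^{l}F_B(M)^{N_B-l}
\sim \binom{N_A}{k}\binom{N_B}{l}T_A(M)^{N_A-k}T_B(M)^{l}.
\end{align*}
Because Assumption \ref{ass:fa_heavier} makes $T_B$ asymptotically much smaller than $T_A$, the summand is smallest when $(N_A-k)$ and especially $l$ are as small as possible. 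The constraint $l\geq k+1$ forces $l\geq 1$, and equality $l=1$ is attainable only if $k=0$. This singles out $(k,l)=(0,1)$, for which the summand is asymptotic to $N_B T_A(M)^{N_A}T_B(M)$, matching the claimed right-hand side.

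To show this is the leading term, I would form the ratio of a generic summand (with $(k,l)\neq (0,1)$) to the candidate dominant one:
\begin{align*}
R_{k,l}(M):= \frac{\binom{N_A}{k}\binom{N_B}{l}}{N_B}\cdot \frac{T_A(M)^{N_A-k}T_B(M)^{l}}{T_A(M)^{N_A}T_B(M)}
= \frac{\binom{N_A}{k}\binom{N_B}{l}}{N_B}\cdot \frac{T_B(M)^{l-1}}{T_A(M)^{k}}.
\end{align*}
Two cases cover all admissible pairs $(k,l)\neq(0,1)$. If $k=0$ and $l\geq 2$, then $R_{0,l}(M)$ is a constant times $T_B(M)^{l-1}\to 0$. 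If $k\geq 1$, the constraint $l\geq k+1$ gives $l-1\geq k\geq 1$, and I would write
\begin{align*}
\frac{T_B(M)^{l-1}}{T_A(M)^{k}} = T_B(M)^{l-1-k}\cdot \Bigl(\frac{T_B(M)}{T_A(M)}\Bigr)^{k},
\end{align*}
where the first factor is bounded (and vanishing if $l-1>k$) while the second tends to $0$ by Assumption \ref{ass:fa_heavier}.

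Finally I would assemble the pieces: since the number of admissible pairs $(k,l)$ is bounded by $(N_A+1)(N_B+1)$, a finite quantity depending only on the conditioning variables, the total contribution of the non-dominant terms is $o(T_A(M)^{N_A}T_B(M))$, and dividing the whole survival function by $N_B T_A(M)^{N_A}T_B(M)$ gives limit $1$. The only mildly delicate point is combining the term-by-term asymptotic equivalences into a single equivalence for the sum, but because the sum is finite and each term is shown either asymptotically equivalent to, or of strictly smaller order than, the leading term, this is immediate.
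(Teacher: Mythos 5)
Your argument is correct and is essentially the paper's own proof: both identify $(k,l)=(0,1)$ as the slowest-decaying term in the finite double sum of Theorem \ref{thm:clearingpricedists} and show every other term is $o\bigl(T_B(M)T_A(M)^{N_A}\bigr)$ using $l\geq k+1$ and $T_B/T_A\to 0$ (the paper phrases this as dividing the sum by the candidate and passing to the limit termwise, you phrase it via ratios of summands). The only blemish is the verbal slip ``the summand is smallest when $(N_A-k)$ and $l$ are as small as possible'' --- you mean largest, i.e.\ slowest decaying --- but the subsequent ratio computation is the actual argument and is right.
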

\begin{remark}
The proof of proposition \ref{prop:fa_heavier} reveals the event that corresponds to the slowest decaying term in the double sum of theorem \ref{thm:clearingpricedists}, namely $l=1,k=0$, corresponding to the event that $\DA(M) = 0,~\DB(M)=1$, meaning all sell orders, but only one buy order, are above $M$. This is interpreted as an auction in which there is little consensus between both sides of the market (buy and sell orders do not overlap), but there is a very aggressive buyer willing to pay a high price.
\end{remark}
When the conditional result of proposition \ref{prop:fa_heavier} is summed with respect to the distribution of $(N_A,N_B)$,  the unconditional tail of $\underline X$ is discovered again by selecting the slowest decaying term. This leads to the main result of this subsection, a relation between the tail of the closing price return distribution and the tail of the order placement distributions in a setting without market orders (its proof is again postponed to the appendix). 
\begin{restatable}[Right tail of the lower clearing price distribution]{theorem}{thmtails}\label{thm:tail_x_heavier_fa}
Under assumption \ref{ass:fa_heavier} we have
$$\PP(\underline{X}>M) \sim C T_A(M) T_B(M), \text{ as } M \to \infty,$$
where $C=\sum_{n=1}^Nn p_{N_A,N_B}(1,n) = \mathbb E[N_B \mathbf{1}_{\{N_A=1\}}]>0$.
\end{restatable}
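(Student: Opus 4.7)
The plan is to combine Proposition \ref{prop:fa_heavier} with the law of total probability and then identify the asymptotically dominant term in the resulting finite sum. Concretely, writing
\begin{align*}
\PP(\underline{X} > M)
= \sum_{(n_A,n_B) \in \mathcal{N}} p_{N_A,N_B}(n_A,n_B)\, \PP(\underline{X} > M \mid N_A = n_A,\, N_B = n_B),
\end{align*}
I would apply Proposition \ref{prop:fa_heavier} to each conditional term to obtain
\[
\PP(\underline{X} > M \mid N_A = n_A,\, N_B = n_B) \sim n_B\, T_B(M)\, T_A(M)^{n_A}, \quad M\to\infty.
\]
Since the index set $\mathcal{N}$ is finite, no uniformity issue arises and termwise asymptotics is legitimate.

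Next, I would identify the slowest-decaying contribution. Because $T_A(M)\to 0$ as $M\to\infty$ (which follows from $1-F_A(M)\to 0$ combined with $1-F_A\sim T_A$), the factor $T_A(M)^{n_A}$ is largest precisely when $n_A$ is smallest; by the assumption that $p_{N_A,N_B}$ is supported on $\mathcal N$ with $N_A\geq 1$, this minimum is $n_A=1$. I would therefore split the sum into the $n_A=1$ contribution and the remainder: the $n_A=1$ part contributes
\[
T_A(M)\, T_B(M) \sum_{n=1}^{N} n\, p_{N_A,N_B}(1,n) \;+\; o\bigl(T_A(M)T_B(M)\bigr),
\]
while each term with $n_A \geq 2$ is of order $T_A(M)^{n_A} T_B(M) = o(T_A(M) T_B(M))$. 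Summing the finitely many such error terms preserves the $o(\cdot)$ bound.

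Finally, identifying $C = \sum_{n=1}^{N} n\, p_{N_A,N_B}(1,n) = \mathbb{E}[N_B\,\mathbf{1}_{\{N_A=1\}}]$ gives the claimed asymptotic. Positivity of $C$ follows from the assumption that $p_{N_A,N_B}$ assigns positive probability to every point in $\mathcal{N}$, hence in particular to $(1,1)$.

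The only real obstacle is bookkeeping: one must carefully convert the pointwise asymptotic equivalences into an honest $o\bigl(T_A(M)T_B(M)\bigr)$ bound on the tail contributions with $n_A\geq 2$. This is routine because the index set is finite, so one can, for instance, choose $M$ large enough that each conditional probability is at most twice its asymptotic approximation, then bound each $n_A\geq 2$ term by a constant times $T_A(M)^2 T_B(M)$, which is $o(T_A(M)T_B(M))$ since $T_A(M)\to 0$.
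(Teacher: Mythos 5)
Your proposal is correct and follows essentially the same route as the paper: condition on $(N_A,N_B)$, apply Proposition \ref{prop:fa_heavier} termwise (with the exchange of limit and finite sum justified — the paper invokes dominated convergence, you invoke finiteness of $\mathcal N$, which amounts to the same thing here), and observe that only the $n_A=1$ terms survive after dividing by $T_A(M)T_B(M)$, yielding $C=\mathbb E[N_B\mathbf 1_{\{N_A=1\}}]>0$. No gaps; your extra bookkeeping on the $n_A\geq 2$ remainder is just a more explicit version of the paper's limit computation.
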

The constant $C$ indicates that the slowest decaying term in the sum corresponds to the event that $N_A=1$: large positive returns are possible if there are only few sell orders.
\subsection{Market orders} 
In this subsection we incorporate market orders in the derivation of subsection \ref{subsec:only_limit_orders}. First consider the following assumption for the market order imbalance $\Delta$.
\begin{assumption} \label{ass:delta_bounds}
We assume that $\Delta \in (-N_B,N_A)$ with probability one.
\end{assumption}
This assumption is necessary, because otherwise the clearing prices attain the values $\pm \infty$ with non-zero probability.
Under this assumption, the right tail of the conditional lower clearing price distribution is given by the next proposition (the proof is again postponed to the appendix and $x_+ = \max(x,0)$ and $x_- = \max(-x,0)$ denote the positive and negative part of $x \in \RR$).
\begin{restatable}{proposition}{proptailsmos} \label{prop:fa_heavier_plus_mo}
Under assumptions \ref{ass:fa_heavier} and \ref{ass:delta_bounds}, we have
\begin{align}
\PP(\underline{X}>M|N_A,N_B,\Delta) \sim K(N_A,N_B,\Delta-1) T_B(M)^{(\Delta-1)_-} T_A(M)^{N_A-(\Delta-1)_+},
\end{align}
as $M \to \infty$, where $$K(N_A,N_B,\Delta) = \begin{cases} {N_A \choose \Delta} & \text{ if } \Delta>0 \\
{N_B \choose -\Delta} & \text{ if } \Delta\leq0
\end{cases}.$$
\end{restatable}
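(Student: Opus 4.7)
The plan is to start from the expression for $\PP(\underline X > M\mid N_A,N_B,\Delta)$ in Proposition~\ref{prop:eqprice_el_dist} and locate the slowest-decaying term of its double sum, in the same spirit as Proposition~\ref{prop:fa_heavier}. Under Assumption~\ref{ass:fa_heavier}, $F_A(M),F_B(M)\to 1$ and $(1-F_A(M))^{N_A-k}\sim T_A(M)^{N_A-k}$, $(1-F_B(M))^l\sim T_B(M)^l$ as $M\to\infty$ for any fixed $(k,l)$. Hence the $(k,l)$-summand is asymptotically equivalent to $\binom{N_A}{k}\binom{N_B}{l}T_A(M)^{N_A-k}T_B(M)^l$, and since the sum has only finitely many terms the problem reduces to identifying the admissible $(k,l)$ that maximizes this expression asymptotically and showing all other terms are of strictly smaller order.

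The optimizer depends on the sign of $\Delta-1$, because the constraint $l\ge\max(k-\Delta+1,0)$ switches form. If $\Delta\ge 1$, the bound $l\ge 0$ is active exactly for $k\le\Delta-1$, so $l=0$ is admissible on that range; because $T_B=o(T_A)$ makes any positive $l$ costly and larger $k$ enlarges $T_A^{N_A-k}$, the optimizer is $(k,l)=(\Delta-1,0)$ with contribution $\binom{N_A}{\Delta-1}T_A(M)^{N_A-\Delta+1}$. If $\Delta\le 0$, the constraint forces $l\ge k-\Delta+1\ge 1$, and each unit increase of $k$ replaces a factor $T_A$ by the strictly smaller factor $T_B$, so the optimizer is $(k,l)=(0,1-\Delta)$ with contribution $\binom{N_B}{1-\Delta}T_A(M)^{N_A}T_B(M)^{1-\Delta}$. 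In both cases the coefficient equals $K(N_A,N_B,\Delta-1)$ and the exponents of $T_A$ and $T_B$ match $N_A-(\Delta-1)_+$ and $(\Delta-1)_-$ respectively.

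The last step is to check that every non-optimal $(k,l)$ contributes $o$ of the optimizer. Its ratio to the optimizer can always be rewritten, after invoking the lower bound on $l$, as a bounded constant times $T_A(M)^{p}\bigl(T_B(M)/T_A(M)\bigr)^{q}$ with $p,q\ge 0$ and $p+q\ge 1$, which tends to zero by Assumption~\ref{ass:fa_heavier}. The main obstacle is the $\Delta\ge 1$ case with $k>\Delta-1$: the naive $T_A$-exponent $(\Delta-1)-k$ is negative, so one must couple the surplus to the forced $T_B$-factors via $l\ge k-\Delta+1$ and rewrite the ratio as $T_A(M)^{(\Delta-1)-k+l}\bigl(T_B(M)/T_A(M)\bigr)^{l}$, now with nonnegative $T_A$-exponent and $l\ge 1$. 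A symmetric identity $T_A^{-k}T_B^{l-1+\Delta}=(T_B/T_A)^{k}T_B^{l-1+\Delta-k}$ handles the $\Delta\le 0$ case for $k\ge 1$. Summing the finitely many remaining terms then yields the claimed asymptotic equivalence.
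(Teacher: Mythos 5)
Your proposal is correct and follows essentially the same route as the paper: start from Proposition~\ref{prop:eqprice_el_dist}, split on the sign of $\Delta-1$, identify the slowest-decaying summand ($(k,l)=(\Delta-1,0)$ resp.\ $(0,1-\Delta)$), and show all other finitely many terms are of strictly smaller order via the same regrouping of $T_A$- and $T_B$-factors that the paper uses implicitly. If anything, your version is slightly more complete, since you explicitly cover the boundary case $\Delta=1$, which the paper's case split ($\Delta-1>0$ versus $\Delta-1<0$) omits.
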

This proposition shows that market orders potentially influence the tails heavily: if $\Delta$ is positive and large (close to $N_A$) the influence of the faster decaying term $T_B(M)$ is erased and only the slower decaying term $T_A(M)$ is left, possibly leading to very heavy tails. On the other hand, if $\Delta$ is negative, the influence of the faster decaying term $T_B$ grows, leading to less heavy tails. However, which combinations are possible depends on the joint distribution of $(N_A,N_B,\Delta)$.
Until now, the tails $T_A$ and $T_B$ were unspecified and few assumptions were made on the distribution of $(N_A,N_B)$. To work towards an empirically testable theory, we will make the following assumptions on the distribution of $(N_A,N_B,\Delta)$ and the tails of $F_A,F_B$. Empirically, these assumptions are verified in section \ref{sec:empirics}.
\begin{assumption} \label{ass:delta_general}
Assume $(N_A,N_B,\Delta)$ follows a distribution $P$ on $\{1,\dots,N\} \times \{1,\dots,N\} \times \{-N,\dots,N\}$, with probability mass function denoted by $p$, for some $N \in \NN$. Furthermore, assume that market order imbalance $M_B-M_A$ is proportional to limit order imbalance $N_A-N_B$ (in the opposed direction), that is,
\begin{align} \label{eq:delta} \Delta=M_B-M_A=c (N_A-N_B),
\end{align}
almost surely for some $c \in (0,1)$  and $P(\Delta =0)=0$ (as the case $\Delta=0$ is already considered in subsection \ref{subsec:only_limit_orders}).
  Finally, assume that all possible combinations have positive probability, \ie\
$$p(n,m,d)>0,~\text{for all } n,m \in \{1,\dots,N\}, d \in \pm \{1,\dots,N\} \text{ such that } d=c(n-m).$$
\end{assumption}
Equation (\ref{eq:delta}) states that limit order imbalance points in the opposed direction of market order imbalance, which resembles that limit order submitters adjust their orders to the market order imbalance. This relation ensures assumption \ref{ass:delta_bounds} holds and is empirically verified in section \ref{subsec:cvalues}.
\begin{assumption} \label{ass:fa_heavier_pl}
Assume $F_A,F_B$ both have power law tails, that is, 
\begin{align*}
1-F_A(M) \sim T_A = M^{-a_A}, 1-F_B(M) \sim T_B(M)=M^{-a_B}, \text{ as } M \to \infty,
\end{align*}
for tail exponents $a_B>a_A>0$.
\end{assumption}
Under assumptions \ref{ass:delta_general} and \ref{ass:fa_heavier_pl}, the following theorem (which is proved in the appendix) describes the tail behaviour of the clearing price distribution in terms of the parameters $c$ (controlling the relation between market and limit order imbalance) and $a_A$ and $a_B$ (controlling the heaviness of the tails of the buy and sell limit order placement distribution).

\begin{restatable}[Right tail of the lower clearing price distribution with market orders]{theorem}{thmtailsmos} \label{thm:tailswithmos}
Under assumptions \ref{ass:delta_general} and \ref{ass:fa_heavier_pl}, there exists a constant $\underline{C}>0$, such that
$$\PP(\underline{X}>M) \sim \underline{C} M^{-\underline{a}}, \text{ as } M \to \infty,$$
where 
\begin{align} \underline{a} = \min\left(\frac{(c+1)a_A}{c}, a_A + 2a_B\right).
\end{align}
\end{restatable}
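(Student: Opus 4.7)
The plan is to specialize Proposition \ref{prop:fa_heavier_plus_mo} to the power-law tails of Assumption \ref{ass:fa_heavier_pl}, then uncondition on $(N_A,N_B,\Delta)$ against the joint p.m.f.\ $p$ and extract the slowest-decaying summand. Substituting $T_A(M)=M^{-a_A}$ and $T_B(M)=M^{-a_B}$ into the conditional asymptotic gives, for each $(n,m,d)$ in the (finite) support of $p$,
\begin{align*}
\PP(\underline{X}>M\mid N_A=n,N_B=m,\Delta=d)\sim K(n,m,d-1)\,M^{-E(n,m,d)},
\end{align*}
where $E(n,m,d)=a_B(d-1)_- + a_A\bigl(n-(d-1)_+\bigr)$. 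Because the support is finite and a finite linear combination of power laws is itself a power law with the smallest exponent, the law of total probability yields
\begin{align*}
\PP(\underline{X}>M)\sim \underline{C}\,M^{-\underline{a}},\qquad \underline{a}=\min_{(n,m,d)\in\supp(p)} E(n,m,d),
\end{align*}
with $\underline{C}=\sum_{(n,m,d):\,E(n,m,d)=\underline{a}}K(n,m,d-1)\,p(n,m,d)>0$ by the positivity clause in Assumption \ref{ass:delta_general}.

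Next I would compute $\underline{a}$ by plugging in the constraint $d=c(n-m)$ and splitting on the sign of $d$ (noting $d\neq 0$). On $\{d\geq 1\}$, $E=a_A\bigl((1-c)n+cm+1\bigr)$, which, since $c\in(0,1)$, is strictly increasing in both $n$ and $m$; the admissible corner closest to the origin consistent with $d\geq 1$ is $m=1$, $n=1+1/c$ (so $d=1$), giving $E=(c+1)a_A/c$. On $\{d\leq-1\}$, $E=(1-d)a_B+a_A n$ is minimized at the smallest feasible $n$ and $|d|$, namely $n=1$, $d=-1$, $m=1+1/c$, giving $E=a_A+2a_B$. A routine monotonicity check rules out $|d|\geq 2$: the exponent only grows as one moves further from $d=\pm 1$. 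Taking the minimum over the two candidates yields $\underline{a}=\min\bigl((c+1)a_A/c,\,a_A+2a_B\bigr)$, as claimed.

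The main hurdle is reconciling the continuous-looking minimization with the integrality of $\Delta=c(N_A-N_B)$: the extremal triples exhibited above involve $1+1/c$, so one must ensure $1/c\in\NN$ and $1+1/c\leq N$. Under Assumption \ref{ass:delta_general} the support is precisely $\{(n,m,c(n-m))\in\{1,\dots,N\}^2\times(\pm\{1,\dots,N\})\}$, so these triples carry positive mass whenever they are admissible and the argument goes through directly; if $1/c\notin\NN$, one would replace $d=\pm1$ by the integer $d$ of smallest modulus compatible with $d=c(n-m)$, and the same monotonicity argument identifies the (analogous) minimum and the corresponding constant $\underline{C}$. The delicate bookkeeping between the exponents $E(n,m,d)$ at candidate argmin triples and verifying that no other support point lowers $E$ is the only non-mechanical step.
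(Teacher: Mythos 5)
Your proposal is correct and follows essentially the same route as the paper: specialize Proposition \ref{prop:fa_heavier_plus_mo} to the power-law tails, sum over the finite support of $p$, and minimize the resulting exponent subject to $d=c(n-m)$, which is pinned down at $d=\pm 1$ exactly as in the paper's computation of $\underline{F}(\hat{n}(d),d)$. Your explicit formula for $\underline{C}$ as the sum of $K(n,m,d-1)\,p(n,m,d)$ over the argmin set, and your flagging of the integrality requirement $1/c\in\NN$ (with $1+1/c\leq N$) needed for the triples realizing $d=\pm 1$ to actually lie in the support, are both points the paper passes over silently.
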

Note that without market order imbalance $\Delta$ we have by theorem \ref{thm:tail_x_heavier_fa} $\underline{a} = a_A+a_B$.  This theorem makes testable predictions about the relation between the tails of the closing price return distribution, the tails of the limit order placement distributions and the limit and market order imbalance. In the next section we will investigate this relation empirically.

\section{Empirical results} \label{sec:empirics}
In this section we investigate empirically the relation between the tails of the closing auction return distributions and the tails of the limit order placement distributions. In order to do so, we obtain detailed order-by-order data over 2018 and 2019, for 100 liquid European stocks (with market capitalization above EUR 1 bn) listed on Euronext exchanges in Amsterdam, Paris, Brussels or Lisbon.

Estimating the tails of a distribution comes with a couple of problems. First, the power law of equation (\ref{eq:powerlaw}) is not assumed to hold for all values of $x$, but only for the tail. This necessarily involves a starting point $x_{\text{min}}$ such that the power law holds for all $x>x_{\text{min}}$ (see \cite{Newman05} for a discussion). Unfortunately, the eventual estimate for the tail exponent will depend on this cut-off point: if $x_{\min}$ is taken too small, the bulk instead of the tail will determine the estimates. The cut-off  is often made through visual inspection of a double logarithmic plot. Then the second problem arises, because the cut-off eliminates most of the available data, leaving only a small fraction of the data available for estimation. Finally, models are often designed to describe only `generic' situations well and are not intended to explain extreme events. It is a noteworthy advantage of the call auction model of section \ref{sec:theory} that it is suitable to model both the bulk of the data (as in \cite{Derksenetal20a}) and extreme events, as in the current paper. 

Concerning the amount of data relevant for the tails, in every closing auction a large amount of orders is submitted, so the tails of order placement distributions can be studied per stock. Unfortunately, this is not possible for the closing auction return distribution: per stock, we have only around 500 trading days (two years of around 250 trading days per stock) and thus only that many closing auction returns, which is far insufficient to examine the tails. For example, if we take the 0.05-quantile for the cut-off point $x_{\min}$, only about 25 data points reside in the tail, which is too few for meaningful statistical analysis. So to investigate the tails of the closing auction return distribution, we merge together the closing auction returns of all stocks in the sample.

In the entire section, the reference price $x_0$ will be the \emph{volume weighted average price} over the last five minutes of  continuous trading.  Closing auction returns will be measured in log returns with respect to $x_0$. Following \cite{Bouchaudetal02} and \cite{ZovkoFarmer02}, limit order prices are measured in the number of ticks a limit order is placed away from the reference price $x_0$.
\subsection{Tails of order placement distributions}\label{sec:empirics_fafb}
The mechanism of the call auction makes it possible to study \emph{both} tails of \emph{both} order placement distributions. In figure \ref{fig:tails_fafb_4s}, both  tails of the sell limit order distribution $F_A$ and the buy limit order distribution $F_B$ are shown in log-log plots, for four stocks that are representative for the sample.

\begin{figure}[htp]
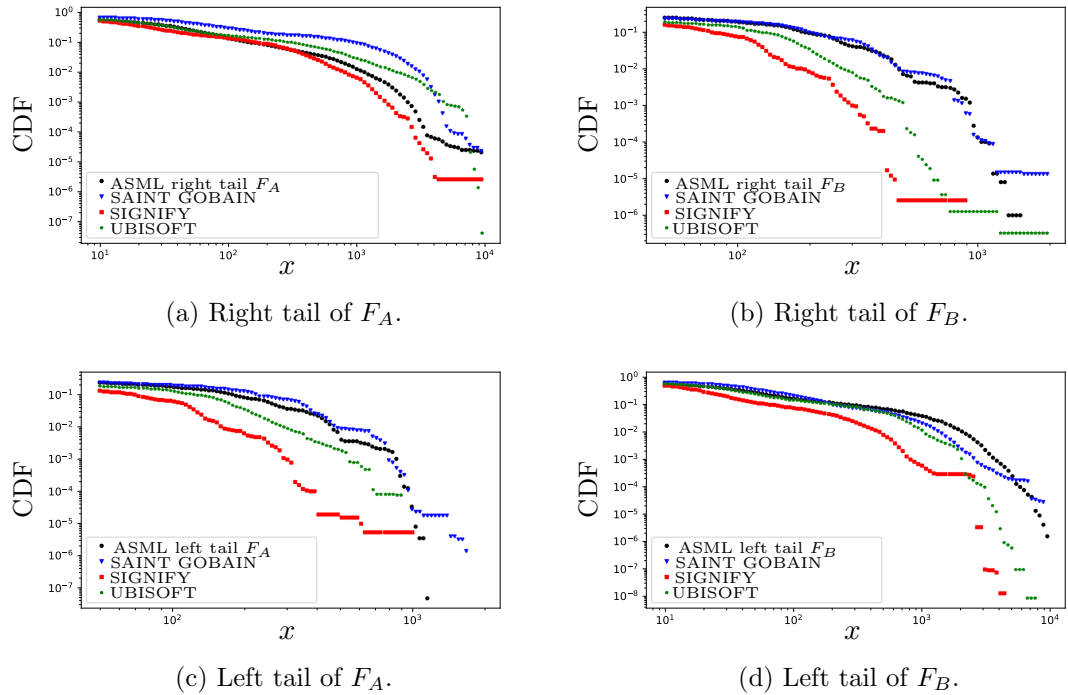

  		\begin{subfigure}[h]{0.5\textwidth}
  		\centering
      \begin{lpic}{fa_right_4s(0.4)}    
       \lbl[t]{{90,5};$x$}
       \lbl[t]{{3,60};\begin{turn}{90}\small{CDF}\end{turn}}
      \lbl[t]{{60,33};\tiny{ASML right tail $F_A$}}   
      \lbl[t]{{56,28};\tiny{SAINT GOBAIN}}
       \lbl[t]{{45,23};\tiny{SIGNIFY}}   
          \lbl[t]{{46,18};\tiny{UBISOFT}}   
      \end{lpic}
      \caption{Right tail of $F_A$.}
      \end{subfigure}
      \begin{subfigure}[h]{0.5\textwidth}
      \centering
      \begin{lpic}{fb_right_4s(0.4)}    
       \lbl[t]{{90,5};$x$}
       \lbl[t]{{3,60};\begin{turn}{90}\small{CDF}\end{turn}}
      \lbl[t]{{60,33};\tiny{ASML right tail $F_B$}}   
      \lbl[t]{{56,28};\tiny{SAINT GOBAIN}}   
       \lbl[t]{{45,23};\tiny{SIGNIFY}}   
          \lbl[t]{{45,18};\tiny{UBISOFT}}    
      \end{lpic}
        \caption{Right tail of $F_B$.}
      \end{subfigure}
      \begin{subfigure}[h]{0.5\textwidth}
      \centering
            \begin{lpic}{fa_left_4s(0.4)}    
       \lbl[t]{{90,5};$x$}
       \lbl[t]{{3,60};\begin{turn}{90}\small{CDF}\end{turn}}
      \lbl[t]{{59,33};\tiny{ASML left tail $F_A$}}   
      \lbl[t]{{56,28};\tiny{SAINT GOBAIN}}
       \lbl[t]{{45,23};\tiny{SIGNIFY}}   
          \lbl[t]{{46,18};\tiny{UBISOFT}}      
      \end{lpic}  
    \caption{Left tail of $F_A$.}    
      \end{subfigure}
      \begin{subfigure}[h]{0.5\textwidth}
      \centering
      \begin{lpic}{fb_left_4s(0.4)}    
       \lbl[t]{{90,5};$x$}
       \lbl[t]{{3,60};\begin{turn}{90}\small{CDF}\end{turn}}
      \lbl[t]{{59,33};\tiny{ASML left tail $F_B$}}   
      \lbl[t]{{56,28};\tiny{SAINT GOBAIN}}   
       \lbl[t]{{45,23};\tiny{SIGNIFY}}   
          \lbl[t]{{45,18};\tiny{UBISOFT}}    
      \end{lpic}
     \caption{Left tail of $F_B$.}
      \end{subfigure}
   \caption{\label{fig:tails_fafb_4s} Log-log plots of the tails of the order placement distributions for 4 selected stocks (ASML Holding NV, Compagnie de Saint Gobain SA, Signify NV, Ubisoft Entertainment SA ). The $x$-axes show the number of ticks above (for the right tail) or below (for the left tail) the reference price $x_0$.}
\end{figure}
 Let us first focus on the right tails, \ie\ the upper panels (a) and (b) of figure \ref{fig:tails_fafb_4s}. The plots of the right tails of $F_A$ show apparent power law behaviour in the range between 10 and 1000 ticks above the reference price. After circa 1000 ticks the tails decay faster for a while, but starting around 5000 ticks a new part of the distribution seems to start. The plot is cut-off at 10~000 ticks, but some even reach until 100~000 ticks. These extremes do not contribute to price formation in the auction at all. We focus on the interval of the price axis where price formation occurs: the intersection of the supports of $F_A$ and $F_B$. For the right tail that means $F_B$ provides the effective upper bound (note that the closing price can never take a value above the highest buy order). The support of $F_B$ ranges until around 1000-2000 ticks above the reference price so that is the region we use in our analysis, roughly in line with the intraday results from \cite{ZovkoFarmer02}\footnote{The sell orders (far) above this region can be thought of as coming from another distribution describing patient sellers not relevant to the auction result. To sketch how irrelevant those orders are: the tick size of a stock is normally between 1 and 5 basis points. Assuming a tick size of 2.5 basis points, 2000 ticks correspond to a return of 50\%, while a closing auction return in the order of 1\% is already high.}. Power law behaviour is less clear for $F_B$, but in the range of 100 until 1000 ticks power law behaviour can be recognized for the liquid stocks ASML and Saint Gobain. For the less liquid stocks Signify and Ubisoft it stops earlier around 500 ticks, but this can also be due to smaller volumes of available data. The lower panels (c) and (d) of figure \ref{fig:tails_fafb_4s} show the  left tails of the order placement distributions. These are very similar to the right tails, when the roles of $F_A$ and $F_B$ are switched. Also, on the left side there is a real cut-off point, corresponding to price 0, which is found somewhere between 2000 and 10~000 ticks.
\begin{figure}[htp]
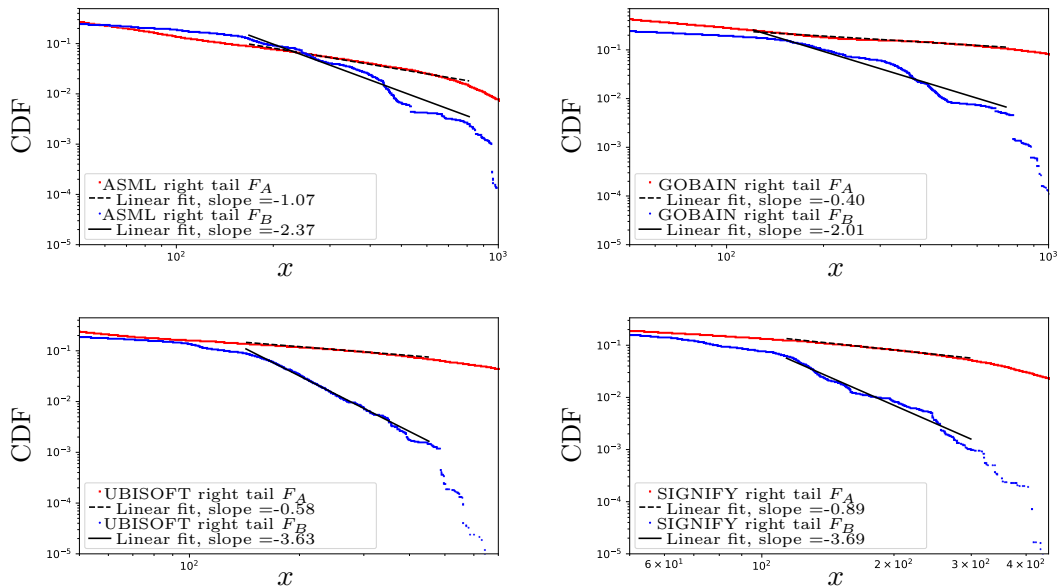

  \centering
      \begin{lpic}{both_right_ASML(0.4)}    
       \lbl[t]{{90,5};$x$}
       \lbl[t]{{3,60};\begin{turn}{90}\small{CDF}\end{turn}}
      \lbl[t]{{58,33};\tiny{ASML right tail $F_A$}}   
      \lbl[t]{{67,28};\tiny{Linear fit, slope =-1.07}}   
       \lbl[t]{{58,23};\tiny{ASML right tail $F_B$}}        
      \lbl[t]{{67,18};\tiny{Linear fit, slope =-2.37}}         
      \end{lpic}
            \begin{lpic}{both_right_GOBAIN(0.4)}    
       \lbl[t]{{90,5};$x$}
       \lbl[t]{{3,60};\begin{turn}{90}\small{CDF}\end{turn}}
      \lbl[t]{{64,33};\tiny{GOBAIN right tail $F_A$}}   
      \lbl[t]{{67,28};\tiny{Linear fit, slope =-0.40}}   
       \lbl[t]{{64,23};\tiny{GOBAIN right tail $F_B$}}        
      \lbl[t]{{67,18};\tiny{Linear fit, slope =-2.01}}         
      \end{lpic}
            \begin{lpic}{both_right_UBISOFT(0.4)}    
       \lbl[t]{{90,5};$x$}
       \lbl[t]{{3,60};\begin{turn}{90}\small{CDF}\end{turn}}
      \lbl[t]{{64,33};\tiny{UBISOFT right tail $F_A$}}   
      \lbl[t]{{67,28};\tiny{Linear fit, slope =-0.58}}   
       \lbl[t]{{64,23};\tiny{UBISOFT right tail $F_B$}}        
      \lbl[t]{{67,18};\tiny{Linear fit, slope =-3.63}}         
      \end{lpic}
                  \begin{lpic}{both_right_SIGNIFY(0.4)}    
       \lbl[t]{{90,5};$x$}
       \lbl[t]{{3,60};\begin{turn}{90}\small{CDF}\end{turn}}
      \lbl[t]{{64,33};\tiny{SIGNIFY right tail $F_A$}}   
      \lbl[t]{{67,28};\tiny{Linear fit, slope =-0.89}}   
       \lbl[t]{{64,23};\tiny{SIGNIFY right tail $F_B$}}        
      \lbl[t]{{67,18};\tiny{Linear fit, slope =-3.69}}         
      \end{lpic}
   \caption{\label{fig:zoomed_right_tails_fafb_4s} Log-log plots of the right tails of the order placement distributions for 4  stocks (ASML Holding NV, Compagnie de Saint Gobain SA, Signify NV, Ubisoft Entertainment SA). The $x$-axes show the number of tick sizes above the reference price $x_0$. Linear fits are also  plotted, fitted on the 0.05-quantile of $F_B$ until the 0.001-quantile of $F_B$, to estimate $a_B$ and $a_A$ }
\end{figure}
In figure \ref{fig:zoomed_right_tails_fafb_4s} we zoom in on the right  tails of $F_B$ and $F_A$ until around 1000 tick sizes above the reference price and provide linear fits as estimators for the values of $a_A$ and $a_B$ (the tail exponents of $F_A$ and $F_B$ as in assumption \ref{ass:fa_heavier_pl}). We perform linear least square fits on the log-log plots of the tails of $F_B$, starting at its 0.05-quantile. Visual inspection shows that in the extreme tails, available data points are too sparse to form a coherent picture. So we stop the fit at the 0.001-quantile of $F_B$, which seems reasonable when inspecting the plots and we make fits for $F_A$ on the same interval\footnote{These choices are somewhat arbitrary, but cut-off choices need to be made in any practical tail analysis (see \cite{Newman05}) and moreover, results do not change substantially when we extend the fit to \eg\ the 0.0001-quantile, or \eg\ start the fit at the 0.01-quantile.}.
For example for ASML, we obtain  $a_A \approx 1.07, ~a_B \approx 2.37$, fitted on the interval of 168 until 862 tick sizes. For all four stocks, $F_A$ shows a  straight, slowly decaying line, resembling a power law with exponents around or even below 1. Furthermore, the tails of $F_B$ decay faster than the tails of $F_A$, with exponents between 2 and 4 (more results are discussed in section \ref{subsec:more_stocks}).

\subsection{Tails of closing auction return distributions} \label{sec:returns}
 For every stock $i$ and day $1 \leq d \leq n$ we have a closing auction return $X_{i,d}$, defined as
$$X_{i,d} = \log(C^{i,d}) - \log(x_0^{i,d}),$$
where $C^{i,d}$ is the closing price of stock $i$ on day $d$ and $x_0^{i,d}$ is the reference price of stock $i$ on day $d$.
Following \eg\  \cite{Gopikrishnanetal98} we standardize the returns per stock. That is, we divide for every stock $i$ the return sample $\{X_{i,d}: 1 \leq d \leq n\}$ by its standard deviation and obtain  a sample of standardized returns of size $n \approx 500$. These samples are all merged together into one large sample to study the tails of the closing auction return distributions.
In figure \ref{fig:closingreturns} the right and left tails of the return distribution are shown in log-log plots, showing clear power law behaviour from 2 up to 10 standard deviations for both tails. Linear least square fits are also shown (starting the fit at 2 standard deviations), giving tail exponents $a=5.28$ for the left tail and $a=4.74$ for the right tail. 
\begin{figure}[h]
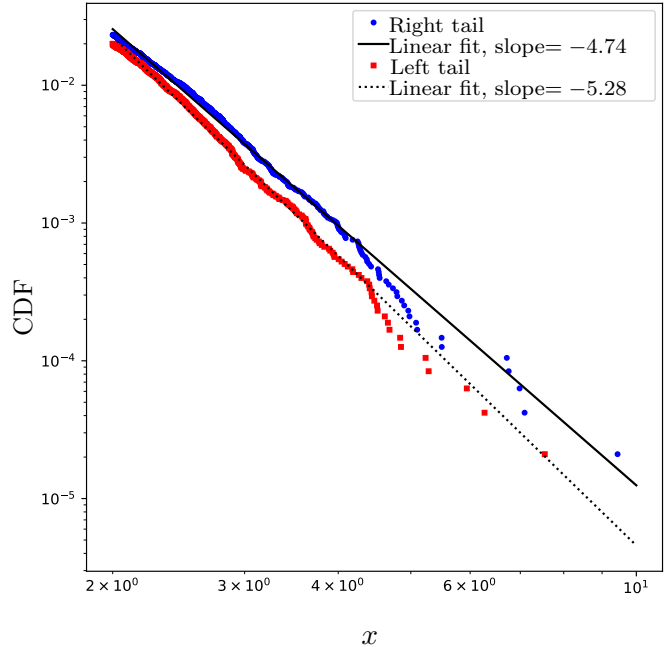

  \centering
      \begin{lpic}{right_left_tails_100s_long(0.55)}   
              \lbl[t]{{90,5};$x$}
       \lbl[t]{{7,90};\begin{turn}{90}\small{CDF}\end{turn}}
      \lbl[t]{{106,153};\scriptsize{Right tail}}
      \lbl[t]{{123,148};\scriptsize{Linear fit, slope$=-4.74$}}  
      \lbl[t]{{104.5,143};\scriptsize{Left tail}}
      \lbl[t]{{123,138};\scriptsize{Linear fit, slope$=-5.28$}}   
      \end{lpic}
      
   \caption{\label{fig:closingreturns} Log-log plot of the tails of the closing auction return distribution for all 100 stocks in our sample. Blue dots show the right tail, that is $P(X>x)$, red squares the left tail, that is $P(X<-x)$, the $x$-axis is in standardized returns. Linear fits are also plotted, giving a tail exponent of $a=4.74$ for the right tail and $a=5.28$ for the left tail.
   }
\end{figure}

This suggests closing auction returns are less heavy tailed than intraday returns over short time intervals, for which a tail exponent $a \approx 3$ is widely supported in the literature (see \eg\ \cite{Gopikrishnanetal98}). This difference might be explained in qualitative terms by the large transacted volumes in the closing auctions. It is known that tails of return distributions become thinner when longer time intervals are considered, an effect that is known as \emph{aggregational Gaussianity} (the empirical fact that return distributions converge to normal distributions when the interval length increases, see \eg\ \cite{Cont01}). This is theoretically supported by the call auction model: the clearing price distribution approaches a normal distribution, when the number of orders tends to infinity (see \cite{Derksenetal20a}, theorem 3.1). Moreover, the empirical effect is known to be stronger if time intervals are measured in trade time \citep{Chakrabortietal11}. In Europe nowadays around 30\% of the daily volume is transacted in the closing auction, which makes the duration of the closing auction in trade time similar to approximately half a day of continuous trading\footnote{The fraction of daily transacted volume that is transacted in closing auctions has increased greatly over the past years, especially since the introduction of MiFID II, see \cite{Derksenetal20b}.}.

\subsection{The effect of market orders} \label{subsec:cvalues}
Before we study the influence of market orders on the tail behaviour of closing auction return distributions, we first investigate the relation between the market order imbalance and the limit order imbalance. In figure \ref{fig:cvalues} the market order imbalance $\Delta=M_B-M_A$ in every closing auction is plotted against the limit order imbalance $N_A-N_B$ in that closing auction, for the four stocks that were also studied in section \ref{sec:empirics_fafb}.
 The figure shows that the proportionality relation between $\Delta$ and $N_A-N_B$ introduced in equation (\ref{eq:delta}) holds approximately, with values of $c$ in the range 0.2-0.4, estimated using linear least square regression. This means that \emph{limit order imbalance is generally in the opposite direction of market order imbalance.}
An explanation for this lies in the chronology of the closing auction. We observe in auction data that the vast majority of market orders is submitted in the first seconds of the closing auction, revealing the market order imbalance early in the auction (during the accumulation phase of the auction, information on the imbalance and an indicative price is released, so it is possible to act on this information). Subsequently, limit orders are placed against the direction of the market order imbalance, reflecting strategic behaviour: when there is a large positive market order imbalance (more buy market orders than sell market orders), one can submit a (possibly large) sell order without adversely affecting the price. 
\begin{figure}[h]
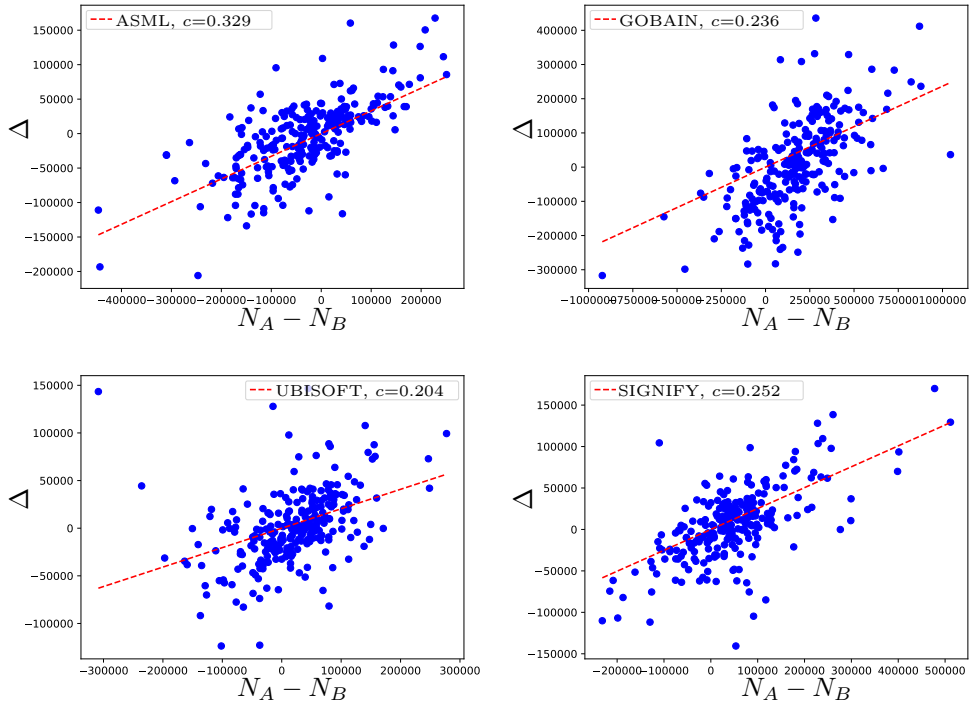

  \centering
      \begin{lpic}{cvalue_ASML(0.4)}    
       \lbl[t]{{0,70};\begin{turn}{90}\small{$\Delta$}\end{turn}}
       \lbl[t]{{90,7};\small{$N_A-N_B$}}
      \lbl[t]{{54,104};\tiny{ASML, $c$=0.329}}        
      \end{lpic}
            \begin{lpic}{cvalue_GOBAIN(0.4)}    
       \lbl[t]{{0,70};\begin{turn}{90}\small{$\Delta$}\end{turn}}
       \lbl[t]{{90,7};\small{$N_A-N_B$}}
      \lbl[t]{{58,104};\tiny{GOBAIN, $c$=0.236}}              
      \end{lpic}
            \begin{lpic}{cvalue_UBISOFT(0.4)}    
       \lbl[t]{{0,70};\begin{turn}{90}\small{$\Delta$}\end{turn}}
       \lbl[t]{{90,7};\small{$N_A-N_B$}}
      \lbl[t]{{112,104};\tiny{UBISOFT, $c$=0.204}}             
      \end{lpic}
                \begin{lpic}{cvalue_SIGNIFY(0.4)}    
       \lbl[t]{{0,70};\begin{turn}{90}\small{$\Delta$}\end{turn}}
       \lbl[t]{{90,7};\small{$N_A-N_B$}}
      \lbl[t]{{58,104};\tiny{SIGNIFY, $c$=0.252}}         
      \end{lpic}
   \caption{\label{fig:cvalues} The difference $N_A-N_B$ plotted against the market order imbalance $\Delta$, showing limit order imbalance goes against the direction of market order imbalance. The dashed red line is the result of linear least square regression, to estimate the value of $c$ in equation (\ref{eq:delta}), which is the slope of the dashed red line (outliers of more than four standard deviations away from the mean are removed).}
\end{figure}

Next, we will investigate the effect of market orders on the tail exponents. Consider figure \ref{fig:example_auction}, where two auction results are shown. Supply and demand curves are represented by the solid lines and the point of intersection is the closing price, indicated by the black star. When market orders are removed, translated supply and demand curves (plotted by the dashed lines) lead to an alternative closing price, represented by the black square. The upper panel shows a situation in which a large positive closing auction return is caused by a high market order imbalance. When the market order imbalance would be removed, the closing price would be much lower (black square). The lower panel shows a very different situation: a small positive closing auction return, but a strongly negative market order imbalance. If in this case the market order imbalance would be removed, the closing auction return would get much higher (black square).
\begin{figure}
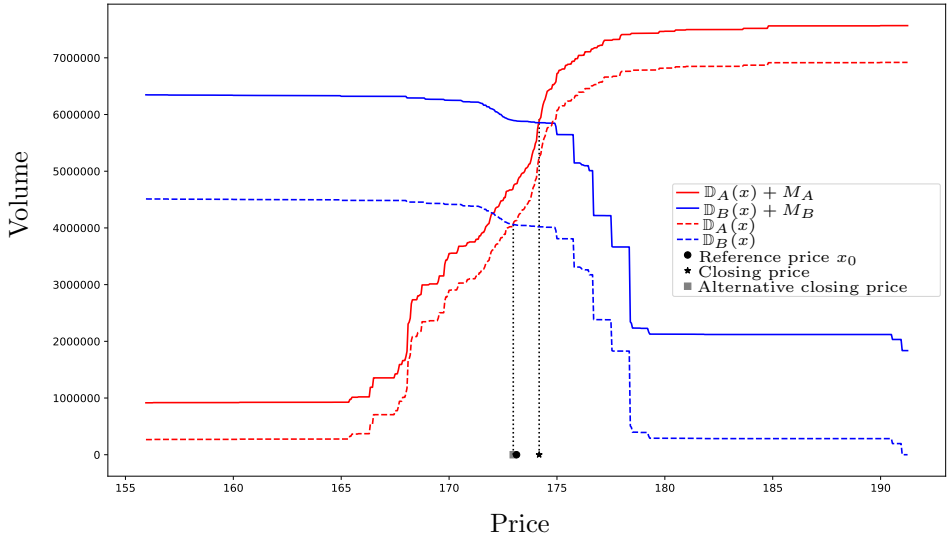
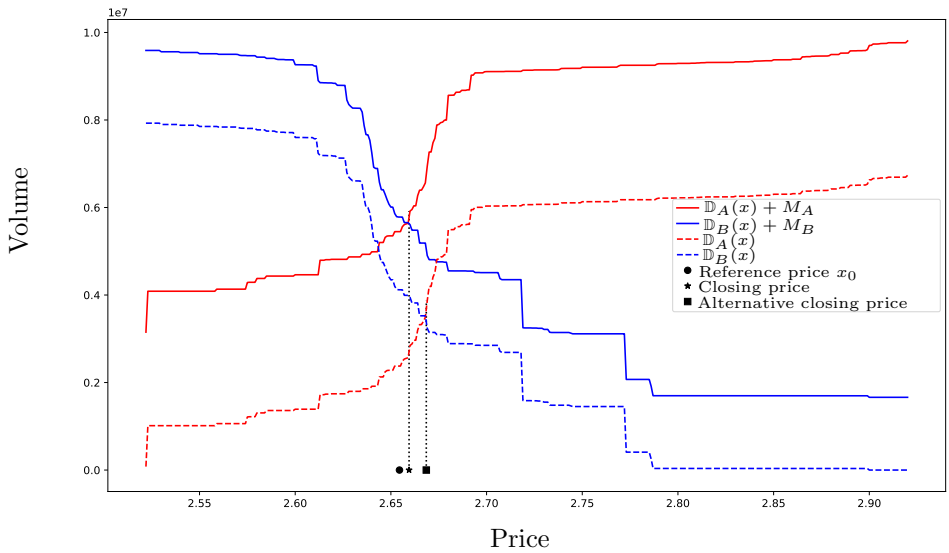

  \begin{subfigure}{\textwidth}
  \centering
          \begin{lpic}{atypical_auction_ASML_2018-03-16(0.4)}    
       \lbl[t]{{180,10};\small{Price}}
       \lbl[t]{{15,130};\begin{turn}{90}\small{Volume}\end{turn}}
       \lbl[t]{{259,119};\tiny{$\DA(x) + M_A$}}
       \lbl[t]{{259,113};\tiny{$\DB(x) + M_B$}}
       \lbl[t]{{249,108};\tiny{$\DA(x)$}}
       \lbl[t]{{249,103};\tiny{$\DB(x)$}}      
       \lbl[t]{{265,97};\tiny{Reference price $x_0$}}      
       \lbl[t]{{257,92};\tiny{Closing price}}      
       \lbl[t]{{273,87};\tiny{Alternative closing price}}      

      \end{lpic}
      \caption{Closing auction ASML, 2018-03-16.}
      \end{subfigure}
      ~
      \begin{subfigure}[h]{\textwidth}
      \centering
      \begin{lpic}{typical_auction_KPN_2018-02-07(0.4)}    
       \lbl[t]{{180,10};\small{Price}}
       \lbl[t]{{15,130};\begin{turn}{90}\small{Volume}\end{turn}}
       \lbl[t]{{259,119};\tiny{$\DA(x) + M_A$}}
       \lbl[t]{{259,113};\tiny{$\DB(x) + M_B$}}
       \lbl[t]{{249,108};\tiny{$\DA(x)$}}
       \lbl[t]{{249,103};\tiny{$\DB(x)$}}      
       \lbl[t]{{265,97};\tiny{Reference price $x_0$}}      
       \lbl[t]{{257,92};\tiny{Closing price}}      
       \lbl[t]{{273,87};\tiny{Alternative closing price}}      

      \end{lpic}
      \caption{Closing auction KPN, 2018-02-07.}
      \end{subfigure}
   \caption{\label{fig:example_auction} Two closing auction results. Solid lines are the supply (red) and demand (blue) curves of the particular closing auction, including market orders (for convenience sell (buy) market orders are placed just below (above) the lowest sell (highest buy) limit order). Dashed lines show the supply and demand curves without market orders. The black dot denotes the reference price $x_0$, the black star denotes the closing price and the black square denotes the alternative price when only limit orders are considered. }
\end{figure}

 The two scenarios presented in figure \ref{fig:example_auction} raise the question which is more common: are large closing auction returns caused by large market order imbalances or is this potential effect cancelled by limit order imbalance and are limit orders usually the driver of large returns? To answer this question, we also investigate the tails of the return distribution of the \emph{alternative closing price}, defined as the intersection point of the supply and demand curves when the market orders are removed (black squares in figure \ref{fig:example_auction}). So, for every stock $i$ and day $d$ we have an alternative closing auction return $\tilde{X}_{i,d}$, defined as
$$\tilde X_{i,d} = \log(\tilde{C}^{i,d}) - \log(x_0^{i,d}),$$
where $\tilde C_{i,d}$ is the alternative closing price of stock $i$ on day $d$. We again standardize these returns per stock, giving for every stock around 500 alternative closing auction returns, which are merged to study the tails. In figure \ref{fig:closingreturns_alt} the tails of the alternative closing price return distribution are shown, together with the tails of the real closing price return distribution from figure \ref{fig:closingreturns}. The figure shows that the tails become \emph{heavier} when market orders are removed. For the right tail we document a tail exponent $a=3.75$ without market orders, compared to $a=4.74$ with market orders. For the left tail, the tail exponent becomes $a=3.9$ when market orders are removed, compared to the value $a=5.28$ when market orders are included. 
\begin{figure}[h]
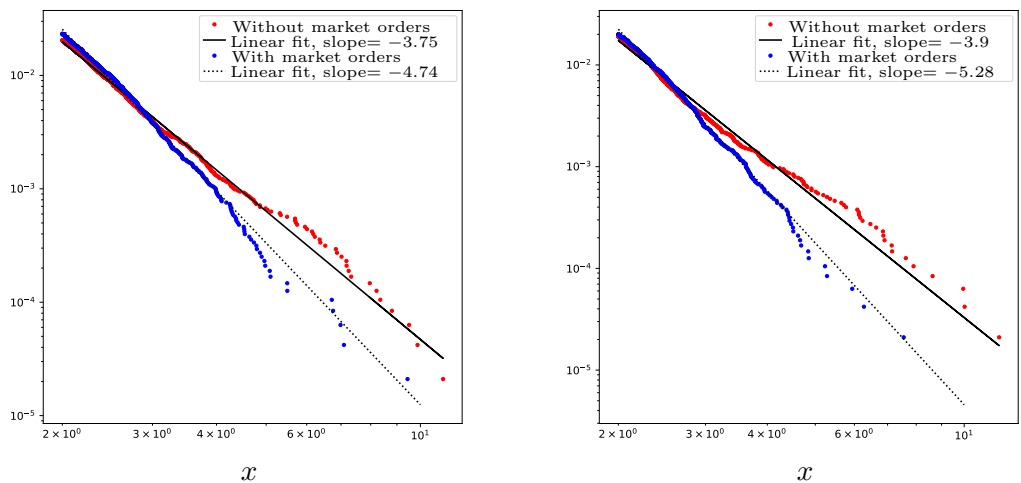

\begin{subfigure}[h]{0.5\textwidth}
  \centering
      \begin{lpic}{right_tails_mos_100s(0.4)}    
       \lbl[t]{{90,5};$x$}
      \lbl[t]{{117,153};\tiny{Without market orders}}
      \lbl[t]{{118,148};\tiny{Linear fit, slope$=-3.75$}}  
      \lbl[t]{{112,143};\tiny{With market orders}}
      \lbl[t]{{118,138};\tiny{Linear fit, slope$=-4.74$}}   
      \end{lpic} \caption{Right tail return distribution, $P(X>x)$}
      \end{subfigure} 
      \begin{subfigure}[h]{0.5\textwidth}
       \begin{lpic}{left_tails_mos_100s(0.4)}    
       \lbl[t]{{90,5};$x$}
      \lbl[t]{{117,153};\tiny{Without market orders}}
      \lbl[t]{{118,148};\tiny{Linear fit, slope$=-3.9$}}  
      \lbl[t]{{112,143};\tiny{With market orders}}
      \lbl[t]{{118,138};\tiny{Linear fit, slope$=-5.28$}}   
      \end{lpic}
      \caption{Left tail return distribution, $P(X<-x)$}
      \end{subfigure}
   \caption{\label{fig:closingreturns_alt} Log-log plot of the tails of the closing auction return distribution for all 100 stocks in the sample. Blue dots show the tails for the real closing auction return distribution, red dots the tails for the alternative closing auction returns that emerge when market orders are removed. 
   }
\end{figure}

It is thus concluded that large closing price fluctuations are in general not caused by a large market order imbalance (at least, not directly). The explanation for this counter-intuitive result lies in the chronology of the auction and the placement of limit orders: when the market order imbalance is positive (negative), there are more sell (buy) limit orders submitted (\cf\ figure \ref{fig:cvalues}). Theorems \ref{thm:tail_x_heavier_fa} and \ref{thm:tailswithmos} give the model's view on the matter and state that without market orders the tail exponent is equal to $a_A+a_B$ and with market orders it is equal to $\min(\frac{(c+1 )a_A}{c},a_A+2a_B)$. This means that tails get heavier without market orders, whenever
\begin{align} \label{eq:heavier_condition} c\leq \frac{a_A}{a_B}.
\end{align}
This equation in fact resembles two conditions that should be fulfilled to make it possible that tails are heavier without market orders (see also equation (\ref{eq:heavier_condition0})). First, $c$ should be small and positive, reflecting that the abovementioned strategic behaviour is strong: when there is a large market order imbalance, in general the limit order difference overcompensates for this. Second, $a_B$ should not be too large compared to $a_A$. This is a condition on the right tail of the buy limit order distribution. Without market orders, the highest buy limit order serves as an upper bound for the closing price. So to obtain heavier tails without market orders, the right tail of $F_B$ should be sufficiently heavy (small $a_B$). It turns out that condition (\ref{eq:heavier_condition}) is indeed satisfied for most of the stocks: for example, for ASML  we obtained estimators $a_B \approx 2.37, a_A \approx 1.07$, $c\approx 0.329$ (\cf\ figures \ref{fig:zoomed_right_tails_fafb_4s} and \ref{fig:cvalues}), satisfying the condition in equation (\ref{eq:heavier_condition}). Indeed, theorems \ref{thm:tail_x_heavier_fa} and \ref{thm:tailswithmos} imply that the tail exponent for closing auction returns of ASML is $a_A+a_B=3.44$ without market orders and $\frac{(c+1 )a_A}{c} =4.32$ with market orders. In the next subsection we will verify the theoretical results on the whole sample consisting of 100 stocks. 
\subsection{Model-predicted and realized tail exponents compared}
 \label{subsec:more_stocks}
In this subsection the relations predicted by the model are tested over the whole sample of 100 stocks. For every stock we estimate the tail exponents of the order placement distributions ($a_A$ and $a_B$) and the value of the parameter $c$ (as in equation (\ref{eq:delta})). The results are shown in table \ref{table:results_small} (for 50 stocks with the lower market capitalizations) and table \ref{table:results_large} (for 50 stocks with the higher market capitalizations). To estimate the parameter $c$, we use linear least squares regression and to estimate the values of $a_A$ and $a_B$ we use the method described in section \ref{sec:empirics_fafb}: for every stock, we make linear least square fits on double logarithmic plots as in figure \ref{fig:zoomed_right_tails_fafb_4s}, on the interval between the 0.05- and 0.001-quantiles of $F_B$. The absolute values of the resulting slopes are the estimators for $a_A$ and $a_B$. For example, for ASML we obtain in this way estimators $a_B\approx 2.37,~a_A \approx 1.07$ and for Ubisoft we find $a_B \approx 3.63,~ a_A \approx 0.58$, \cf\ figure \ref{fig:zoomed_right_tails_fafb_4s}. In tables \ref{table:results_small} and \ref{table:results_large} the results are shown for all stocks in the sample, the columns $a_B(r)$ and $a_A(r)$ give the estimated tail exponents for the right tails of $F_B$ and $F_A$.
For the left tails, the same method applies when the roles of $F_B$ and $F_A$ are interchanged. On the left side, $F_B$ has a heavier tail and $F_A$ provides the effective lower bound. 
\begin{figure}[h]
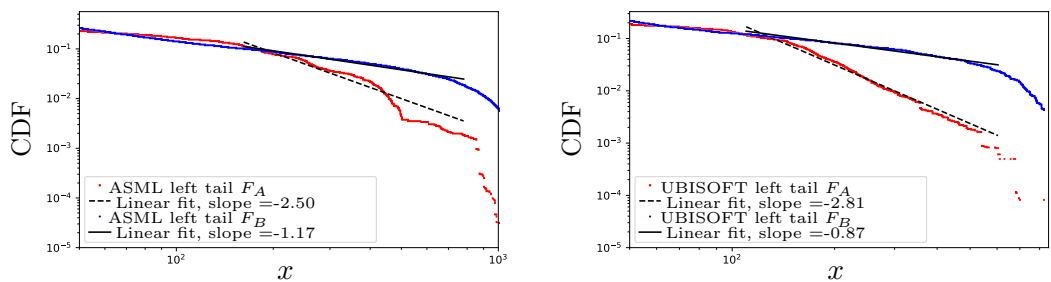

  \centering
      \begin{lpic}{both_left_ASML(0.4)}    
       \lbl[t]{{90,5};$x$}
       \lbl[t]{{3,60};\begin{turn}{90}\small{CDF}\end{turn}}
      \lbl[t]{{58,33};\tiny{ASML left tail $F_A$}}   
      \lbl[t]{{67,28};\tiny{Linear fit, slope =-2.50}}   
       \lbl[t]{{58,23};\tiny{ASML left tail $F_B$}}        
      \lbl[t]{{67,18};\tiny{Linear fit, slope =-1.17}}         
      \end{lpic}
                  \begin{lpic}{both_left_UBISOFT(0.4)}    
       \lbl[t]{{90,5};$x$}
       \lbl[t]{{3,60};\begin{turn}{90}\small{CDF}\end{turn}}
      \lbl[t]{{64,33};\tiny{UBISOFT left tail $F_A$}}   
      \lbl[t]{{67,28};\tiny{Linear fit, slope =-2.81}}   
       \lbl[t]{{64,23};\tiny{UBISOFT left tail $F_B$}}        
      \lbl[t]{{67,18};\tiny{Linear fit, slope =-0.87}}         
      \end{lpic}
   \caption{\label{fig:zoomed_left_tails_fafb_2s} Log-log plots of the left tails of the order placement distributions for ASML Holding NV and Ubisoft Entertainment SA. The $x$-axis shows the number of tick sizes above the reference price $x_0$. Linear fits are also  plotted, fitted on the 0.05-quantile of $F_A$ until the 0.001-quantile of $F_A$, to estimate $a_B$ and $a_A$ }
\end{figure}

 \begin{figure}
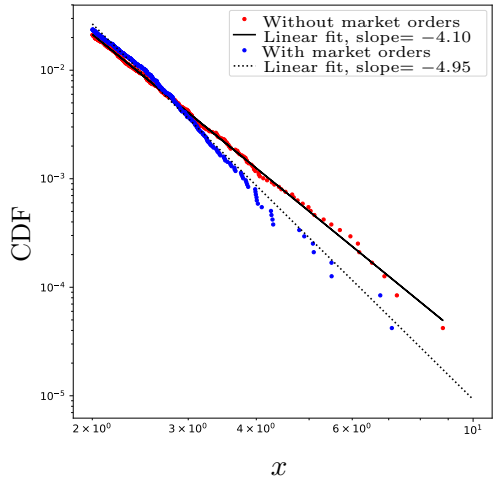
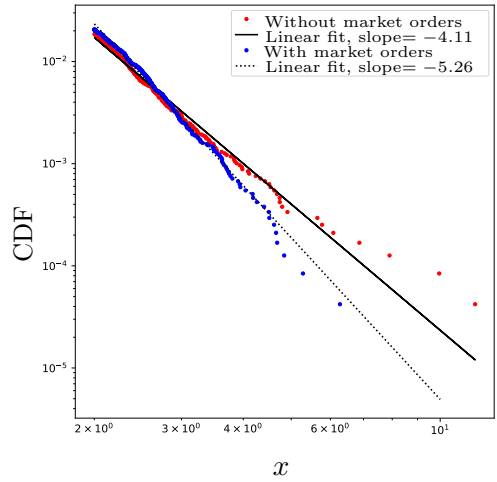
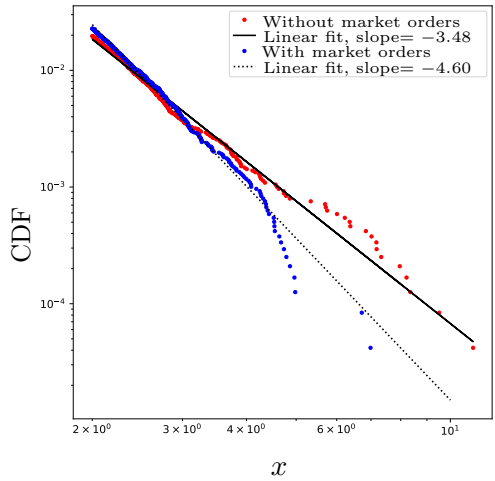
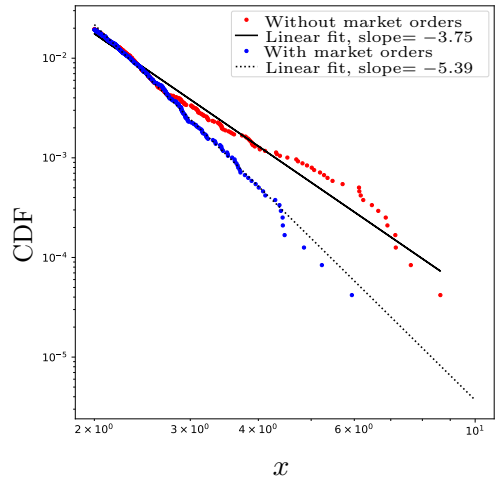

\begin{subfigure}[h]{0.5\textwidth}
  \centering
      \begin{lpic}{smallcaps_right_long(0.4)}    
       \lbl[t]{{90,5};$x$}
       \lbl[t]{{5,90};\begin{turn}{90}\small{CDF}\end{turn}}
      \lbl[t]{{117,153};\tiny{Without market orders}}
      \lbl[t]{{119,148};\tiny{Linear fit, slope$=-4.10$}}  
      \lbl[t]{{112,143};\tiny{With market orders}}
      \lbl[t]{{119,138};\tiny{Linear fit, slope$=-4.95$}}   
      \end{lpic} \caption{Small caps, right tail return distribution.}
      \end{subfigure} 
      \begin{subfigure}[h]{0.5\textwidth}
       \begin{lpic}{smallcaps_left_long(0.4)}    
       \lbl[t]{{90,5};$x$}
       \lbl[t]{{5,90};\begin{turn}{90}\small{CDF}\end{turn}}
      \lbl[t]{{117,153};\tiny{Without market orders}}
      \lbl[t]{{119,148};\tiny{Linear fit, slope$=-4.11$}}  
      \lbl[t]{{112,143};\tiny{With market orders}}
      \lbl[t]{{119,138};\tiny{Linear fit, slope$=-5.26$}}   
      \end{lpic}
      \caption{Small caps, left tail return distribution.}
      \end{subfigure}
   
\begin{subfigure}[h]{0.5\textwidth}
  \centering
      \begin{lpic}{largecaps_right_long(0.4)}    
       \lbl[t]{{90,5};$x$}
       \lbl[t]{{5,90};\begin{turn}{90}\small{CDF}\end{turn}}
      \lbl[t]{{117,153};\tiny{Without market orders}}
      \lbl[t]{{119,148};\tiny{Linear fit, slope$=-3.48$}}  
      \lbl[t]{{112,143};\tiny{With market orders}}
      \lbl[t]{{119,138};\tiny{Linear fit, slope$=-4.60$}}   
      \end{lpic} \caption{Large caps, right tail return distribution.}
      \end{subfigure} 
      \begin{subfigure}[h]{0.5\textwidth}
       \begin{lpic}{largecaps_left_long(0.4)}    
       \lbl[t]{{90,5};$x$}
       \lbl[t]{{5,90};\begin{turn}{90}\small{CDF}\end{turn}}
      \lbl[t]{{117,153};\tiny{Without market orders}}
      \lbl[t]{{119,148};\tiny{Linear fit, slope$=-3.75$}}  
      \lbl[t]{{112,143};\tiny{With market orders}}
      \lbl[t]{{119,138};\tiny{Linear fit, slope$=-5.39$}}   
      \end{lpic}
      \caption{Large caps, left tail return distribution.}
      \end{subfigure}
   \caption{\label{fig:closingreturns_small_largecaps} Log-log plots of the tails of the closing auction return distributions for the 50 small cap stocks of table \ref{table:results_small} (upper panel) and 50 large cap stocks of table \ref{table:results_large} (lower panel). Blue dots show the tails for the real closing auction return distribution, red dots the tails for the alternative closing auction returns that emerge when market orders are removed.
   }
\end{figure}

In figure \ref{fig:zoomed_left_tails_fafb_2s} the left tails of the order placement distributions are shown for ASML and Ubisoft, as well as the linear least square fits, showing that for the left tails $a_A\approx 2.50, ~a_B \approx 1.17$ for ASML and $a_A\approx 2.81, ~a_B \approx 0.87$ for Ubisoft. In tables \ref{table:results_small} and \ref{table:results_large} the columns $a_B(l)$ and $a_A(l)$ give the estimated tail exponents for the left tails of $F_B$ and $F_A$.
Estimates for $a_A,~a_B$ and $c$ give rise to an estimate for the tail exponent $\underline a$ for the return distribution of that particular stock. With market orders $\underline a = \min(\frac{(c+1 )a_A}{c},a_A+2a_B)$  (\cf\ theorem \ref{thm:tailswithmos})  and without market orders $\underline a = a_A + a_B$ (\cf\ theorem \ref{thm:tail_x_heavier_fa})\footnote{Note that for the left tails the roles of $a_A$ and $a_B$ need to be interchanged (see also remark \ref{remark:clearingprices}).}. Ideally, we would test these predictions against the realized tail exponents of the return distribution \emph{for every stock}. However, as noted in the beginning of this section, this is not possible, because we only have around 500 closing auction returns per stock. Instead, we can verify the predictions over groups of stocks, by comparing estimated tail coefficients with the model's average predicted values.

First, consider the whole sample of 100 stocks. In figure \ref{fig:closingreturns_alt} it was shown that the right tail of the closing price return distribution has an estimated tail exponent of $a=4.74$, which changes to $a=3.75$ if market orders are removed. If we take the average of the model's predictions over all 100 stocks, we find an average predicted tail exponent of 4.89 with market orders (column `$\underline a(r)$ MO' in tables \ref{table:results_small} and \ref{table:results_large}) and 3.89 without market orders (column `$\underline a(r)$ no MO' in tables \ref{table:results_small} and \ref{table:results_large}). Furthermore, figure \ref{fig:closingreturns_alt} shows that the left tail of the closing price return distribution has an estimated tail exponent of $a=5.28$, which changes to $a=3.90$ if the market orders are removed. For the left tail, the average predicted tail exponent over all 100 stocks equals 5.01 with market orders (column `$\underline a(l)$ MO' in tables \ref{table:results_small} and \ref{table:results_large}) and 3.72 without market orders (column `$\underline a(l)$ no MO' in tables \ref{table:results_small} and \ref{table:results_large}).  
The predicted tail exponents vary a lot between the different stocks, suggesting that the heaviness of the tails depends on the stock. To additionally test if these per stock predictions give information about the real tail exponents, we split our sample into 50 stocks with the lower market caps (those in table \ref{table:results_small}) and 50 stocks with the higher market caps (table \ref{table:results_large}). In that way, the groups are kept large enough to examine the tails of the closing auction return distributions. 

In figure \ref{fig:closingreturns_small_largecaps} the tails of the closing auction return distribution for the 50 small caps and the 50 large caps are shown in double logarithmic plots, again with and without market orders (similar to figure \ref{fig:closingreturns_alt}). The linear fits to the double logarithmic plots are the realized tail exponents for the both groups, which can again be compared to the average predicted values in tables \ref{table:results_small} and \ref{table:results_large}. The results are summarized in table \ref{table:predicted_vs_realized}, showing first of all that the model's predicted exponents are quite close to the realized exponents. Given that estimation of tails (and tail exponents in particular) is generally thought of as a difficult statistical problem, the congruence is quite remarkable.
Second, based on the modelling assumption in equation (\ref{eq:delta}), the model predicts correctly that the tails get heavier if market orders are removed, and by how much.
The theoretical predictions are especially accurate for the case without market orders, which is not surprising: theorem \ref{thm:tail_x_heavier_fa} holds very generally and follows directly from the mechanics of the closing auction. For the case with market orders, more assumptions were made (see assumption \ref{ass:delta_general}). Most importantly, we assumed  equation (\ref{eq:delta}) holds true, which of course in reality holds only approximately (see also figure \ref{fig:cvalues}). When looking at tables \ref{table:results_small} and \ref{table:results_large}, the predictions for the case with market orders vary strongly between the stocks. We do not claim that the most extreme values that are predicted are close to reality, but we have shown that, on average, model predicted and realized tail exponents match remarkably well.

\begin{table}[htb]
\scriptsize
\centering
\begin{tabular}{l ll|ll l ll|ll}
  &\multicolumn{4}{c}{Left tails}  &&\multicolumn{4}{c}{Right tails}  \\
  \hline
           & \multicolumn{2}{c}{MO}  &  \multicolumn{2}{c}{No MO}     &  ~     & \multicolumn{2}{c}{MO}   &  \multicolumn{2}{c}{No MO}          
        \\   \hline   
           & Predicted   & Realized & Predicted & Realized & & Predicted   & Realized & Predicted & Realized \\
All stocks & 5.01       & 5.28     & 3.72      & 3.90  & & 4.89        & 4.74     & 3.89      & 3.75        \\
Small caps &  5.76        & 5.26     & 4.14      & 4.11 && 5.19        & 4.95     & 4.17      & 4.10      \\
Large caps &  4.25        & 5.39     & 3.29      & 3.75 && 4.59        & 4.60      & 3.61      & 3.48     \\
\hline
\end{tabular}
\caption{\label{table:predicted_vs_realized} Average predicted tail exponents compared to realized tail exponents. Predicted exponents are averages over tables \ref{table:results_small} and \ref{table:results_large}, realized exponents are the results of the linear fits in figures \ref{fig:closingreturns_alt} (all stocks) and \ref{fig:closingreturns_small_largecaps} (small and large caps), for the cases with (MO) and without (No MO) market orders.}
\end{table}

\newpage
\begin{table}[] \tiny
\begin{tabular}{llllllll p{0.75 cm} p{0.75cm} p{0.75 cm} p{0.75cm}}
Stock                     & Exch. & Mcap & $a_A$(l) & $a_B$(l)     & $a_A$(r)    & $a_B$(r)   & $c$     & $\underline a$(l)  no~MO & $\underline a$(l) ~~~MO &$\underline a$(r) no~MO & $\underline a$(r) ~~~MO  \\
\hline
ASM INTL                    & AMS      & 6.7  & 3.312 & 1.154 & 0.717 & 3.377 & 0.127 & 4.466         & 7.778     & 4.094          & 6.357      \\
AALBERTS                    & AMS      & 3.8  & 3.178 & 0.763 & 1.207 & 3.301   & 0.174 & 3.941         & 5.135     & 4.508          & 7.808      \\
WDP REIT                    & BRU      & 5.2  & 3.142 & 3.279 & 1.739 & 3.032 & 0.113 & 6.421         & 9.562     & 4.771          & 7.804      \\
REXEL                       & PAR      & 3.2  & 2.105 & 1.782 & 1.465 & 2.626 & 0.184 & 3.887         & 5.992     & 4.091          & 6.718      \\
EURONEXT                    & PAR      & 6.8  & 3.319 & 1.943 & 1.918 & 3.229 & 0.164 & 5.262         & 8.582     & 5.146          & 8.375      \\
IMCD GROUP                  & AMS      & 6.0  & 3.328 & 0.972 & 1.859 & 3.016 & 0.158 & 4.300           & 7.118     & 4.875          & 7.892      \\
SIGNIFY                     & AMS      & 4.6  & 3.485 & 1.181 & 0.888 & 3.690  & 0.252 & 4.666         & 5.866     & 4.579          & 4.413      \\
ALTEN                       & PAR      & 2.8  & 3.443 & 1.476 & 1.017 & 2.420  & 0.137 & 4.919         & 8.362     & 3.437          & 5.857      \\
BIC                         & PAR      & 1.9  & 3.025 & 1.644 & 0.916 & 3.820  & 0.324 & 4.669         & 6.720      & 4.736          & 3.746      \\
EUTELSAT COM    & PAR      & 1.9  & 2.969 & 0.993 & 0.677 & 3.752 & 0.201   & 3.962         & 5.972     & 4.429          & 4.068      \\
INGENICO GROUP              & PAR      & 8.5  & 1.534 & 0.970  & 0.740  & 2.996 & 0.129 & 2.504         & 4.038     & 3.735          & 6.461      \\
EURAZEO                     & PAR      & 3.3  & 4.512 & 1.953 & 1.536 & 3.928 & 0.134 & 6.464         & 10.976    & 5.464          & 9.393      \\
AEGON                       & AMS      & 5.0  & 1.552 & 0.801 & 0.421 & 2.561 & 0.15  & 2.353         & 3.905     & 2.982          & 3.220       \\
KPN KON                     & AMS      & 10.0 & 2.363 & 1.011 & 0.584 & 3.942 & 0.207 & 3.373         & 5.736     & 4.526          & 3.413      \\
RANDSTAD                    & AMS      & 8.4  & 3.071 & 1.130  & 0.939 & 3.603 & 0.291 & 4.201         & 5.016     & 4.541          & 4.166      \\
KLEPIERRE REIT              & PAR      & 3.5  & 3.242 & 1.016 & 0.503 & 3.317 & 0.371 & 4.259         & 3.754     & 3.820           & 1.858      \\
SUEZ                        & PAR      & 9.9  & 1.416 & 0.866 & 0.724 & 2.588 & 0.250  & 2.281         & 3.697     & 3.312          & 3.613      \\
GALP ENERGIA          & LIS      & 6.8  & 4.467 & 2.173 & 1.438 & 3.977 & 0.484 & 6.640          & 6.663     & 5.415          & 4.410       \\
ARKEMA                      & PAR      & 7.2  & 4.292 & 1.231 & 1.354 & 4.441 & 0.278 & 5.522         & 5.659     & 5.795          & 6.225      \\
COVIVIO                     & PAR      & 5.2  & 3.400   & 3.485 & 2.230  & 3.279 & 0.304 & 6.884         & 10.284    & 5.509          & 8.788      \\
ICADE REIT                  & PAR      & 3.4  & 4.179 & 1.649 & 1.371 & 3.499 & 0.231 & 5.828         & 8.798     & 4.870           & 7.314      \\
IPSEN                       & PAR      & 6.5  & 2.515 & 1.463 & 0.844 & 2.594 & 0.200   & 3.978         & 6.493     & 3.439          & 5.072      \\
ORPEA                       & PAR      & 5.9  & 1.605 & 0.786 & 0.987 & 1.683 & 0.126 & 2.391         & 3.997     & 2.671          & 4.354      \\
SCOR                        & PAR      & 4.5  & 3.363 & 1.475 & 0.941 & 3.944 & 0.378 & 4.837         & 5.372     & 4.885          & 3.427      \\
GETLINK                     & PAR      & 6.4  & 1.910 & 1.341 & 1.166 & 3.294 & 0.148 & 3.251         & 5.161     & 4.460           & 7.753      \\
J.MARTINS SGPS              & LIS      & 9.2  & 4.043 & 1.022 & 1.027 & 3.621 & 0.276 & 5.065         & 4.730      & 4.648          & 4.752      \\
DASSAULT AVIAT              & PAR      & 6.3  & 4.021 & 1.155 & 0.764 & 4.088 & 0.282 & 5.176         & 5.247     & 4.852          & 3.471      \\
EDENRED                     & PAR      & 10.1 & 3.832 & 2.070  & 2.352 & 3.763 & 0.282 & 5.902         & 9.419     & 6.115          & 9.879      \\
PUBLICIS GROUPE             & PAR      & 7.6  & 2.600   & 1.251 & 0.704 & 3.221 & 0.390  & 3.851         & 4.462     & 3.925          & 2.509      \\
ATOS                        & PAR      & 7.5  & 2.323 & 0.834 & 0.499 & 2.449 & 0.267 & 3.157         & 3.959     & 2.948          & 2.368      \\
JCDECAUX                    & PAR      & 2.9  & 3.682 & 1.578 & 1.177 & 3.845 & 0.229 & 5.260          & 8.482     & 5.022          & 6.323      \\
EIFFAGE                     & PAR      & 6.9  & 4.086 & 1.324 & 1.785 & 4.083 & 0.248 & 5.410          & 6.656     & 5.868          & 8.973      \\
GECINA                      & PAR      & 7.8  & 2.988 & 2.394 & 2.004 & 2.928 & 0.321 & 5.382         & 8.370      & 4.932          & 7.860       \\
NATIXIS                     & PAR      & 6.5  & 0.763 & 0.513 & 0.524 & 1.672 & 0.155 & 1.276         & 2.040      & 2.196          & 3.868      \\
SES FDR                        & PAR      & 3.0  & 3.138 & 0.649 & 0.905 & 3.218 & 0.186 & 3.786         & 4.132     & 4.123          & 5.768      \\
SEB                         & PAR      & 7.6  & 3.998 & 1.284 & 0.998 & 3.807 & 0.215 & 5.282         & 7.246     & 4.805          & 5.629      \\
UBISOFT           & PAR      & 10.2 & 2.814 & 0.870  & 0.578 & 3.635 & 0.204 & 3.684         & 5.136     & 4.213          & 3.409      \\
ALSTOM                      & PAR      & 9.4  & 1.988 & 0.715 & 0.866 & 2.955 & 0.279 & 2.703         & 3.280      & 3.821          & 3.975      \\
TECHNIPFMC                  & PAR      & 2.6  & 1.113 & 0.729 & 0.552 & 2.478 & 0.217 & 1.842         & 2.955     & 3.029          & 3.097      \\
ACCOR                       & PAR      & 5.9  & 2.384 & 0.677 & 0.645 & 3.079 & 0.251 & 3.061         & 3.374     & 3.724          & 3.215      \\
VEOLIA    & PAR      & 9.8  & 1.476 & 0.838 & 0.484 & 2.598 & 0.267 & 2.314         & 3.791     & 3.083          & 2.296      \\
COLRUYT               & BRU      & 7.3  & 3.340  & 1.373 & 1.245 & 3.390  & 0.318 & 4.713         & 5.689     & 4.635          & 5.159      \\
AGEAS                       & BRU      & 6.7  & 2.408 & 1.115 & 1.254 & 2.735 & 0.321 & 3.524         & 4.585     & 3.989          & 5.155      \\
SOLVAY                      & BRU      & 8.0  & 1.866 & 0.856 & 0.440  & 1.241 & 0.193 & 2.722         & 4.589     & 1.681          & 2.725      \\
UMICORE                     & BRU      & 8.9  & 2.428 & 0.834 & 0.486 & 2.003 & 0.314 & 3.262         & 3.490      & 2.489          & 2.033      \\
PROXIMUS                    & BRU      & 5.2  & 2.618 & 0.845 & 0.666 & 2.820  & 0.271 & 3.463         & 3.964     & 3.486          & 3.126      \\
ABN AMRO BANK               & AMS      & 6.9  & 1.989 & 1.216 & 0.627 & 3.004 & 0.195 & 3.205         & 5.194     & 3.631          & 3.838      \\
CNP ASSURANCES              & PAR      & 7.3  & 3.899 & 1.615 & 2.046 & 2.756 & 0.381 & 5.514         & 5.855     & 4.802          & 7.418      \\
UNIBAIL RODAMCO  & AMS      & 5.7  & 1.983 & 1.368 & 1.052 & 1.947 & 0.390  & 3.351         & 4.876     & 2.999          & 3.750       \\
SODEXO                      & PAR      & 8.8  & 3.559 & 1.852 & 1.696 & 3.668 & 0.245 & 5.411         & 8.970      & 5.364          & 8.629     \\
\hline 
Average & - & 6.4 & 	2.84 &	1.30 &	1.06 &	3.12 &	0.24 &	4.14 & 5.76 & 4.17 &5.19
\end{tabular}
\caption{\label{table:results_small} Table of results, for the 50 stocks in our sample with the lower market cap. The column Exch. displays the exchange the stock is traded on (Amsterdam, Paris, Brussels or Lisbon)
and the column Mcap shows the market capitalization of the stock in billions of euros (in October 2020). Then, $a_A$ and $a_B$ are the estimated tail exponents of sell and buy limit order distributions, for the left (l) and right (r) tail. $c$ is the estimator for the constant in equation (\ref{eq:delta}) and $\underline a=a_A + a_B$ without market orders (no MO), and $\underline a =\min(\frac{(c+1 )a_A}{c},a_A+2a_B)$ with market orders (MO), both displayed for left (l) and right (r) tails.}
\end{table}

\begin{table}[] \tiny
\begin{tabular}{llllllll p{0.75 cm} p{0.75cm} p{0.75 cm} p{0.75cm}}
Stock                     & Exch. & Mcap & $a_A$(l) & $a_B$(l)     & $a_A$(r)    & $a_B$(r)   & $c$     & $\underline a$(l)  no~MO & $\underline a$(l) ~~~MO &$\underline a$(r) no~MO & $\underline a$(r) ~~~MO  \\
\hline
AMUNDI                    & PAR      & 12.3  & 2.066 & 1.544 & 0.816 & 2.496 & 0.110  & 3.610          & 5.676     & 3.311          & 5.807      \\
BIOMERIEUX ORD               & PAR      & 16.4  & 2.815 & 1.595 & 2.294 & 3.312 & 0.247 & 4.410          & 7.225     & 5.606          & 8.917      \\
NN GROUP                  & AMS      & 10.9  & 3.187 & 1.051 & 1.010  & 3.797 & 0.376 & 4.238         & 3.847     & 4.807          & 3.698      \\
SARTORIUS  & PAR      & 28.9  & 2.928 & 1.069 & 1.449 & 2.474 & 0.280  & 3.997         & 4.892     & 3.923          & 6.397      \\
WORLDLINE                 & PAR      & 13.2  & 2.386 & 1.001 & 1.131 & 2.902 & 0.241 & 3.386         & 5.146     & 4.034          & 5.819      \\
EDP                       & LIS      & 18.0  & 3.236 & 0.710  & 1.043 & 3.683 & 0.341 & 3.946         & 2.793     & 4.727          & 4.106      \\
TELEPERFORMANCE           & PAR      & 16.1  & 3.869 & 1.862 & 1.256 & 4.010  & 0.134 & 5.731         & 9.601       & 5.266          & 9.276      \\
BOUYGUES                  & PAR      & 11.7  & 2.302 & 1.124 & 0.953 & 2.452 & 0.331 & 3.425         & 4.518     & 3.405          & 3.833      \\
AHOLD DEL                 & AMS      & 26.6  & 1.334 & 0.727 & 0.875 & 2.234 & 0.326 & 2.061         & 2.960      & 3.108          & 3.562      \\
AKZO NOBEL                & AMS      & 17.7  & 2.678 & 1.921 & 1.190  & 2.669 & 0.292 & 4.599         & 7.276     & 3.859          & 5.268      \\
ASML HOLDING              & AMS      & 138.3 & 2.497 & 1.165 &      2.369 &   1.073    & 0.329 & 3.662         & 4.704     &               3.442 &    4.334        \\
DSM KON                   & AMS      & 24.9  & 3.283 & 1.333 & 1.158 & 3.381 & 0.312 & 4.616         & 5.601     & 4.539          & 4.866      \\
HEINEKEN                  & AMS      & 45.4  & 3.917 & 1.101 & 1.146 & 3.833 & 0.270  & 5.018         & 5.177     & 4.979          & 5.391      \\
ING GROEP                 & AMS      & 24.7  & 1.344 & 1.067 & 0.419 & 2.333 & 0.139 & 2.411         & 3.755     & 2.752          & 3.428      \\
PHILIPS KON               & AMS      & 37.5  & 3.129 & 0.498 & 0.884 & 3.566 & 0.318 & 3.626         & 2.063     & 4.450           & 3.666      \\
UNILEVER                  & AMS      & 138.7 & 2.929 & 0.619 & 1.417 & 2.879 & 0.322 & 3.548         & 2.543     & 4.297          & 5.818      \\
WOLTERS KLUWER            & AMS      & 19.3  & 3.473 & 1.417 & 1.823 & 2.997 & 0.244 & 4.891         & 7.235     & 4.820          & 7.817      \\
DANONE                    & PAR      & 34.6  & 2.195 & 1.112 & 0.761 & 2.312 & 0.404 & 3.307         & 3.862     & 3.073          & 2.642      \\
BNP PARIBAS ACT.A         & PAR      & 40.2  & 1.326 & 0.771 & 0.636 & 2.133 & 0.223 & 2.098         & 3.424     & 2.769          & 3.496      \\
AXA                       & PAR      & 36.1  & 0.740  & 0.698 & 0.548 & 1.878 & 0.321 & 1.438         & 2.178     & 2.426          & 2.253      \\
SOCIETE GENERALE          & PAR      & 10.2  & 0.815 & 0.923 & 0.383 & 2.051 & 0.160  & 1.738         & 2.553     & 2.434          & 2.777      \\
L'OREAL                   & PAR      & 163.0 & 3.201 & 0.685 & 1.209 & 2.552 & 0.270  & 3.886         & 3.222     & 3.760           & 5.683      \\
SANOFI                    & PAR      & 108.1 & 1.245 & 0.741 & 0.790  & 2.080  & 0.411 & 1.986         & 2.544     & 2.869          & 2.712      \\
SAINT GOBAIN              & PAR      & 20.0  & 1.242 & 0.936 & 0.402 & 2.014 & 0.237 & 2.178         & 3.420      & 2.416          & 2.099      \\
LEGRAND                   & PAR      & 18.6  & 4.179 & 1.304 & 1.790  & 3.851 & 0.328 & 5.482         & 5.283     & 5.641          & 7.254      \\
TOTAL                     & PAR      & 74.8  & 0.592 & 0.632 & 0.549 & 0.534 & 0.308 & 1.224         & 1.816     & 1.082          & 1.616      \\
HEINEKEN HOLDING          & AMS      & 20.2  & 3.198 & 1.244 & 2.004 & 3.506 & 0.333 & 4.442         & 4.982     & 5.509          & 8.025      \\
ESSILORLUXOTTICA          & PAR      & 50.8  & 2.433 & 0.555 & 1.040  & 2.691 & 0.347 & 2.988         & 2.155     & 3.731          & 4.037      \\
AB INBEV                  & BRU      & 93.5  & 1.800   & 0.809 & 0.499 & 2.441 & 0.210  & 2.610          & 4.410      & 2.940       & 2.879      \\
DASSAULT SYSTEM           & PAR      & 41.4  & 3.381 & 1.439 & 1.144 & 2.779 & 0.341 & 4.819         & 5.653     & 3.923          & 4.495      \\
CHRISTIAN DIOR SE         & PAR      & 74.0  & 3.482 & 1.585 & 1.860  & 4.092 & 0.095 & 5.068         & 8.550      & 5.952          & 10.044     \\
ARCELORMITTAL             & AMS      & 13.4  & 0.450  & 0.610 & 0.566 & 2.015 & 0.138 & 1.060          & 1.511     & 2.580           & 4.595      \\
SAFRAN                    & PAR      & 38.7  & 2.944 & 1.467 & 1.154 & 2.380  & 0.386 & 4.411         & 5.267     & 3.534          & 4.141      \\
ENGIE                     & PAR      & 28.3  & 0.485 & 0.652 & 0.269 & 1.846 & 0.138 & 1.136         & 1.621     & 2.114          & 2.220       \\
EDF                       & PAR      & 31.9  & 0.789 & 0.872 & 0.906 & 2.429 & 0.234 & 1.662         & 2.451     & 3.335          & 4.780       \\
CREDIT AGRICOLE           & PAR      & 21.2  & 0.642 & 0.780  & 0.670  & 1.729 & 0.193 & 1.422         & 2.065     & 2.399          & 4.128      \\
CAPGEMINI                 & PAR      & 18.5  & 2.622 & 0.754 & 0.747 & 2.987 & 0.256 & 3.377         & 3.701     & 3.735          & 3.667      \\
AIRBUS                    & PAR      & 50.4  & 2.921 & 1.232 & 1.320  & 3.268 & 0.116 & 4.153         & 7.075     & 4.588          & 7.856      \\
ORANGE                    & PAR      & 25.3  & 0.480  & 0.925 & 0.515 & 2.162 & 0.250  & 1.405         & 1.885     & 2.676          & 2.577      \\
THALES                    & PAR      & 13.9  & 2.656 & 1.241 & 0.865 & 2.351 & 0.317 & 3.897         & 5.161     & 3.216          & 3.598      \\
MICHELIN                  & PAR      & 16.6  & 2.360  & 0.740  & 0.416 & 2.724 & 0.320  & 3.100           & 3.050      & 3.139          & 1.713      \\
KERING                    & PAR      & 73.7  & 2.572 & 0.864 & 0.684 & 2.463 & 0.332 & 3.436         & 3.466     & 3.146          & 2.742      \\
PERNOD RICARD             & PAR      & 37.1  & 2.740  & 1.090  & 1.153 & 2.256 & 0.297 & 3.829         & 4.761     & 3.409          & 5.036      \\
SCHNEIDER ELECTRIC   & PAR      & 58.2  & 2.644 & 1.534 & 1.058 & 2.594 & 0.326 & 4.178         & 6.240      & 3.652          & 4.304      \\
PEUGEOT                   & PAR      & 14.2  & 1.725 & 0.888 & 0.805 & 2.248 & 0.128 & 2.613         & 4.339     & 3.053          & 5.301      \\
ROYAL DUTCH SHELL  & AMS      & 83.0  & 0.677 & 0.529 & 0.350  & 0.753 & 0.137 & 1.206         & 1.883     & 1.103          & 1.856      \\
GBL                       & BRU      & 11.9  & 2.952 & 1.323 & 1.095 & 2.457 & 0.194 & 4.275         & 7.227     & 3.552          & 6.009      \\
KBC                       & BRU      & 18.5  & 2.124 & 0.702 & 0.716 & 2.881 & 0.217 & 2.826         & 3.944     & 3.597          & 4.018      \\
UCB                       & BRU      & 17.9  & 2.607 & 0.760  & 1.136 & 3.077 & 0.228 & 3.367         & 4.101     & 4.213          & 6.126      \\
VIVENDI                   & PAR      & 27.9  & 1.968 & 0.936 & 0.733 & 2.806 & 0.318 & 2.904         & 3.874     & 3.539          & 3.033     \\
\hline
Average & - & 39.7 &	2.27 & 1.02 & 	1.00 &	2.61 &	0.26 &	3.29 &	4.25 &	3.61 &	4.59

\end{tabular}
\caption{\label{table:results_large} Table of results, for the 50 stocks in our sample with the higher market cap. The column Exch. displays the exchange the stock is traded on (Amsterdam, Paris, Brussels or Lisbon) and the column Mcap shows the market capitalization of the stock in billions of euros (in October 2020). Then, $a_A$ and $a_B$ are the estimated tail exponents of sell and buy limit order distributions, for the left (l) and right (r) tail. $c$ is the estimator for the constant in equation (\ref{eq:delta}) and $\underline a=a_A + a_B$ without market orders (no MO), and $\underline a =\min(\frac{(c+1 )a_A}{c},a_A+2a_B)$ with market orders (MO), both displayed for left (l) and right (r) tails.}
\end{table}

\section{Conclusions}\label{sec:conclusion}
In this paper we study the tails of closing auction return distributions, both from a theoretical and empirical point of view, focusing on large closing price fluctuations.  Using the stochastic call auction model of \cite{Derksenetal20a}, we relate tail exponents of order placement distributions and tail exponents of the return distribution.  Empirical analysis supports the model's predictions. In theory, large market orders could be a cause of large closing price fluctuations, but this potential effect is cancelled by limit orders that are submitted against the direction of the market order imbalance.
Instead,  limit order placement appears to be the primary cause of observed heavy tails in closing auction return distributions.

\appendix
\section{Proofs}
\propcondtails*
\begin{proof}
The expression for the conditional distribution of $\underline X$ in equation (\ref{eq:clearingpricedist}), implies
\begin{align*}
\lim_{M \to \infty} &\frac{\PP(\underline{X}>M|N_A,N_B)}{T_B(M)T_A(M)^{N_A}} \\ 
& = \sum_{k=0}^{N_A} \sum_{l=k+1}^{N_B} \lim_{M \to \infty} {N_A \choose k} {N_B \choose l} \left(\frac{1-F_A(M)}{T_A(M)}\right)^{N_A} (1-F_A(M))^{-k} \\
&\quad \times \left(\frac{1-F_B(M)}{T_B(M)}\right) (1-F_B(M))^{l-1} \\
&= \sum_{k=0}^{N_A}  \sum_{l=k+1}^{N_B} \lim_{M \to \infty}{N_A \choose k}  {N_B \choose l} \left(\frac{1-F_B(M)}{1-F_A(M)}\right)^{k}  (1-F_B(M))^{l-1-k} \\
& = \sum_{k=0}^{N_A} \sum_{l=k+1}^{N_B}\lim_{M \to \infty}{N_A \choose k} {N_B \choose l} \left(\frac{T_B(M)}{T_A(M)}\right)^{k}  (1-F_B(M))^{l-1-k}\\
& = N_B,
\end{align*}
where we exchange limit and sum by dominated convergence, and the last line follows because all terms are 0, except when $l=1,k=0$.
\end{proof}
\thmtails*
\begin{proof}
The result of proposition \ref{prop:fa_heavier} implies 
\begin{align*}
\lim_{M \to \infty} \frac{\PP(\underline{X}>M)}{CT_A(M)T_B(M)} &= \lim_{M \to \infty} \frac{\mathbb{E} N_B T_B(M) T_A(M)^{N_A}}{CT_A(M)T_B(M)} \\
&= \lim_{M \to \infty} \frac{\sum_{i=1}^N \sum_{j=1}^N p_{N_A,N_B}(i,j) j T_B(M) T_A(M)^{i}}{CT_A(M)T_B(M)} \\
& = \sum_{i=1}^N \sum_{j=1}^N \frac{jp_{N_A,N_B}(i,j)}{C} \lim_{M \to \infty}T_A(M)^{i-1}\\
& = \frac{\sum_{j=1}^Njp_{N_A,N_B}(1,j)}{C}=1,
\end{align*}
where sum and limit are exchanged by dominated convergence and the last line follows by the fact that all terms in the sum are 0, except for $i=1$.
\end{proof}
\proptailsmos*
\begin{proof}
Suppose first that $\Delta-1>0$. Then
\begin{align*}
\lim_{M \to \infty} &\frac{\PP(\underline{X}>M|N_A,N_B,\Delta)}{T_A(M)^{N_A-(\Delta-1)}} \\ 
& = \sum_{k=0}^{N_A} \sum_{l=\max(k-\Delta+1,0)}^{N_B} \lim_{M \to \infty} {N_A \choose k} {N_B \choose l} \left(\frac{1-F_A(M)}{T_A(M)}\right)^{N_A-(\Delta-1)} \\
& \qquad \qquad \qquad \qquad \qquad \times (1-F_A(M))^{\Delta-1-k}  (1-F_B(M))^l \\
& = \sum_{k=0}^{N_A} \sum_{l=\max(k-\Delta+1,0)}^{N_B}\lim_{M \to \infty}{N_A \choose k} {N_B \choose l} \frac{T_B(M)^l}{T_A(M)^{k-\Delta+1}}\\
& = {N_A \choose \Delta-1},
\end{align*}
where the last line follows because all terms are 0, except when $k=\Delta-1,l=0$.\\
Now let $\Delta-1<0$, then
\begin{align*}
\lim_{M \to \infty} &\frac{\PP(\underline{X}>M|N_A,N_B,\Delta)}{T_A(M)^{N_A}T_B(M)^{1-\Delta}} \\ 
& = \sum_{k=0}^{N_A} \sum_{l=k+1-\Delta}^{N_B} \lim_{M \to \infty} {N_A \choose k} {N_B \choose l} \left(\frac{1-F_A(M)}{T_A(M)}\right)^{N_A-k} \\
 & \qquad \qquad \qquad \qquad \qquad \times T_A(M)^{-k} \left(\frac{1-F_B(M)}{T_B(M)}\right)^l T_B(M)^{l+\Delta-1} \\
& = \sum_{k=0}^{N_A} \sum_{l=k+1-\Delta}^{N_B}\lim_{M \to \infty}{N_A \choose k} {N_B \choose l} \frac{T_B(M)^{l+\Delta-1}}{T_A(M)^{k}}\\
& = {N_B \choose 1-\Delta},
\end{align*}
where the last line follows because all terms are 0, except when $k=0,l=1-\Delta$.
\end{proof}
\thmtailsmos*
\begin{proof}
Under assumptions \ref{ass:delta_general} and \ref{ass:fa_heavier_pl}, proposition \ref{prop:fa_heavier_plus_mo} transforms into,
\begin{align*}
&\PP(\underline{X}>M) \sim \sum_{n=1}^N \sum_{m=1}^N  \sum_{d=-N}^N K(n,m,d-1) M^{-(a_A(n-d+1) + (a_B-a_A)\max(-d+1,0))} p(n,m,d),&
\end{align*}
as $M \to \infty$. Here, we used that $\max(-x,0) -\max(x,0) = -x$, for all $x \in \RR$. By noting that $K(n,m,d)$ is bounded from above and below (by ${N \choose N/2}$ and 1), we see that the statement of the theorem holds true, for \\
$$\underline{a} = \min_{n,d:~ p(n,d)>0} (a_A (n-d+1) -(d-1)(a_B-a_A)\mathbf{1}_{\{d<1\}}),$$
where the minimum is taken over all $n,d$ such that $p(n,d) = \sum_m p(n,m,d)>0$.
Now note that the function $\underline{F}(n,d):= a_A (n-d+1) -(d-1)(a_B-a_A)\mathbf{1}_{\{d<1\}}$ is increasing in $n$, for every $d$. So the minimum is attained for the lowest $n$ with positive probability. Recall that we assumed $\Delta=c(N_A-N_B)$ and $N_B \in \{1,\dots,N\}$, so $p(n,d)=0$ for $n<\frac dc+1$, so the lowest $n$ with positive probability is $\hat{n}(d)=\max(\frac dc+1,1)$, for given $d$. Inserting into $\underline{F}$ leads to
$$\underline{F}(\hat{n}(d),d) = \begin{cases} a_A -(d-1) a_B &\text{ if } d\leq -1 \\
a_A((\frac 1c-1)d+2) & \text{ if }  d \geq 1 \end{cases},$$
which is minimal for $d=\pm1$, proving that
$$\overline{a} = \min\left(a_A\left(\frac 1c+1\right),a_A+2a_B\right)=\min\left(\frac{(c+1)a_A}{c}, a_A + 2a_B\right).$$
\end{proof}
\end{document}